\newcommand{\e}{\mathrm{e}}
\newcommand{\Pb}{\mathbb{P}}
\newcommand{\I}{\mathrm{i}}
\newcommand{\dx}{\mathrm{d}}
\newcommand{\R}{\mathbb{R}}
\newcommand{\Z}{\mathbb{Z}}
\newcommand{\C}{\mathbb{C}}
\renewcommand{\Re}{\operatorname{Re}}
\renewcommand{\Im}{\operatorname{Im}}
\newcommand{\Tr}{\operatorname{Tr}}
\newcommand{\const}{\mathrm{const}}
\newcommand{\one}{\mathbbm{1}}
\newcommand{\eps}{\varepsilon}
\newcommand{\Or}{\mathcal{O}}
\newcommand{\diag}{\operatorname{diag}}
\newcommand{\GT}{\mathrm{GT}}
\newcommand{\lin}{\operatorname{span}}
\newtheorem{thm}{Theorem}
\newtheorem{prop}{Proposition}[section]
\newtheorem{lem}[prop]{Lemma}
\newtheorem{cla}[prop]{Claim}
\newtheorem{rem}[prop]{Remark}
\newenvironment{remark}{\begin{rem}\normalfont}{\end{rem}}
\title{Perturbed GUE Minor Process\\ and Warren's Process with Drifts}
\author{Patrik L.\ Ferrari\thanks{Institute for Applied Mathematics, Bonn University, Endenicher Allee 60, 53115 Bonn, Germany. E-mail: {\tt ferrari@uni-bonn.de}} \and
Ren\'e Frings\thanks{Institute for Applied Mathematics, Bonn University, Endenicher Allee 60, 53115 Bonn, Germany. E-mail: {\tt frings@uni-bonn.de}}
}
\date{}
\begin{document}
\sloppy
\maketitle

\begin{abstract}
We consider the minor process of (Hermitian) matrix diffusions with constant diagonal drifts. At any given time, this process is determinantal and we provide an explicit expression for its correlation kernel. This is a measure on the Gelfand-Tsetlin pattern that also appears in a generalization of Warren's process~\cite{War07}, in which Brownian motions have level-dependent drifts. Finally, we show that this process arises in a diffusion scaling limit from an interacting particle system in the anisotropic KPZ class in $2+1$ dimensions introduced in~\cite{BF08}. Our results generalize the known results for the zero drift situation.
\end{abstract}

\section{Introduction}
In this paper we determine a determinantal point process living on the Gelfand-Tsetlin cone $\GT_N$,
\begin{equation}
\GT_N = \{(x^1,x^2,\dotsc,x^N) \in \R^1 \times \R^2 \times \dotsb \times \R^N : x_k^{n+1} \leq x_k^n \leq x_{k+1}^{n+1} \}
\end{equation}
arising both from random matrices diffusions and interacting particle systems. An element $x \in \GT_N$ is called a Gelfand-Tsetlin pattern. Here is a graphical representation of $x \in \GT_4$, which illustrates the interlacing condition on $x$,
\begin{equation*}
\arraycolsep-.1ex
\begin{array}{ccccccccccccc}
x_1^4 & &  & & x_2^4 & &  & & x_3^4 & &  & & x_4^4 \\[-.5ex]
& \rotatebox[origin=c]{-45}{$\le$} & & \rotatebox[origin=c]{45}{$\le$} & & \rotatebox[origin=c]{-45}{$\le$} & & \rotatebox[origin=c]{45}{$\le$} & & \rotatebox[origin=c]{-45}{$\le$} & & \rotatebox[origin=c]{45}{$\le$}\\[-.5ex]
& & x_1^3 & & & & x_2^3 & & & & x_3^3 \\[-.5ex]
& & & \rotatebox[origin=c]{-45}{$\le$} & & \rotatebox[origin=c]{45}{$\le$} & & \rotatebox[origin=c]{-45}{$\le$} & & \rotatebox[origin=c]{45}{$\le$} \\[-.5ex]
& & & & x_1^2 & & & & x_2^2 \\[-.5ex]
& & & & & \rotatebox[origin=c]{-45}{$\le$} & & \rotatebox[origin=c]{45}{$\le$} \\[-.5ex]
& & & & & & x_1^1
\end{array}
\end{equation*}
Measures on Gelfand-Tsetlin patterns naturally appear in several fields of mathematics like (a) random matrix theory~\cite{Bar01,JN06,FN08,OR06} where the question of universality was recently approached in~\cite{Met11}, (b) random tiling problems~\cite{BG08,OR06,NY11}, (c) representation theory~\cite{BK07,BK09}, and (d) interacting particle systems~\cite{BF08,Nor08,NY11} and diffusions~\cite{War07,WW08}. Probably the most famous example which belongs to more than one of these classes is the Aztec diamond.
Indeed, the measure on the Aztec diamond both comes from a random tiling problem~\cite{EKLP92} and can be obtained through a Markov chain~\cite{Nor08} on a (discrete) Gelfand-Tsetlin pattern, which itself can be seen as a special case of the more general Markov chain construction in~\cite{BF08}. A continuous space analogue is for instance Warren's process, a system of interacting standard Brownian diffusions~\cite{War07,WW08}. Recently, the link between the ``drift-less'' case of~\cite{BF08} and Warren's process has been studied in~\cite{GS12}.

In this paper we consider GUE matrix diffusions with drifts. Its eigenvalue process at a fixed time is a determinantal point process and we explicitly determine its correlation kernel. We then show that the point process arises from a diffusion scaling limit of an interacting particle system in the anisotropic KPZ class in $2+1$ dimensions~\cite{BF07,BF08}, as well as in a generalization of Warren's process~\cite{War07} if we let the Brownian motions to have level-dependents drifts. The analogue results for the zero-drift case were all previously known, see~\cite{JN06,GS12,FF10}.

Remark that the GUE minor process and the Warren process are not the same if considered as stochastic processes. Indeed, this is true already for the zero-drift case. Without drift, the GUE minor process was described in~\cite{JN06}, while Warren's process was introduced in~\cite{War07}. It is known that the two processes coincide when projected on ``space-like'' paths~\cite{FF10}, in which case they both are Markovian and determinantal. However, in the whole ``space-time'' the two processes are different~\cite{ANvM10b}. Here we focus on the fixed-time process although the connection will certainly hold along ``space-like'' paths as for the zero-drift case, see~\cite{FF10}.

\subsubsection*{Matrix diffusions}
The first model we study on $\GT_N$ is a variant of the GUE minor process which has been introduced in \cite{JN06}. Consider an $N \times N$ Hermitian matrix $H$ with eigenvalues $\lambda_1^N\leq \dotsb \leq \lambda_N^N$. Denote by $H^n$ the submatrix obtained by keeping the first $n$ rows and columns of $H$, and its ordered eigenvalues by $\lambda^n_1\leq \dotsb \leq \lambda_n^n$. The collection of all these eigenvalues $(\lambda^1,\dotsc,\lambda^N)$ then forms a Gelfand-Tsetlin pattern, with $\lambda^n=(\lambda_1^n,\dotsc,\lambda_n^n)$. In this paper we take $H(t)$ to be a GUE matrix diffusion perturbed by a deterministic drift matrix $M=\diag(\mu_1,\dotsc,\mu_N)$, i.e., we consider $G(t)=H(t)+t M$ with $H$ evolving as standard \mbox{GUE Dyson's} Brownian Motion starting from $0$. The eigenvalues' point process $\xi$ has support on $\R\times \{1,\dotsc,N\}$,
\begin{equation}
\xi(\dx x,m)=\sum_{1\leq k \leq n \leq N}\delta_{n,m} \delta_{\lambda_k^n}(\dx x)
\end{equation}
and its correlation function is given as follows (see Section~\ref{sec:2} for the proof).

\begin{thm}\label{thm:1}
For a fixed time $t>0$ consider the eigenvalues' point process on the $N$ submatrices of $H(t)$. Then, its $m$-point correlation function $\varrho^m_t$ is given by
\begin{equation}\label{eq:24}
\varrho^m_t((x_1,n_1),\dots,(x_m,n_m))=\det[K_t((x_i,n_i),(x_j,n_j))]_{1\leq i,j\leq m},
\end{equation}
with $(x_i,n_j) \in \R \times \{1,\dots,N\}$ and correlation kernel
\begin{equation}\label{eq:26}
K_t((x,n),(x',n'))=-\phi^{(n,n')}(x,x')+\sum_{k=1}^{n'} \Psi_{n-k}^{n,t}(x) \Phi_{n'-k}^{n',t}(x'),
\end{equation}
where\footnote{ For a set $S$, the notation $\frac{1}{2\pi\I} \oint_{\Gamma_S}\dx w\,f(w)$ means that the integral is taken over any positively oriented simple contour that encloses only the poles of $f$ belonging to $S$.}
\begin{align}
\phi^{(n,n')}(x,x') &= \frac{(-1)^{n'-n}}{2\pi\I} \int_{\I\R+\mu_-}\dx z\, \frac{\e^{z(x'-x)}}{(z-\mu_{n+1})\dotsb(z-\mu_{n'})}\,\one_{[n<n']}, \label{eq:9}\\
\Psi_{n-k}^{n,t}(x) & = \frac{(-1)^{n-k}}{2\pi \I} \int_{\I\R+\mu_-} \dx z \, \e^{tz^2/2-xz} \, \frac{(z-\mu_1)\dotsb (z-\mu_n)}{(z-\mu_1)\dotsb(z-\mu_k)}, \label{eq:9a} \\
\Phi_{n-\ell}^{n,t}(x) &= \frac{(-1)^{n-\ell}}{2\pi \I} \oint_{\Gamma_{\mu_1,\dotsc,\mu_N}} \dx w\, \e^{-tw^2/2+x w} \, \frac{(w-\mu_1)\dotsb(w-\mu_{\ell-1})}{(w-\mu_1)\dotsb(w-\mu_n)} \label{eq:9b}
\end{align}
with $\mu_-<\min\{\mu_1,\dots,\mu_N\}$.
\end{thm}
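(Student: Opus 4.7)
The plan is to derive the joint density of the minor eigenvalues $(\lambda^1,\dotsc,\lambda^N)$ of $G(t)=H(t)+tM$ on the Gelfand--Tsetlin cone $\GT_N$, and then read off the kernel \eqref{eq:26} from the Eynard--Mehta framework as implemented in~\cite{BF08}.

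For the density, the matrix $G(t)$ has law $\propto\e^{-\Tr((G-tM)^2)/(2t)}$ on its entries. Because $M$ is diagonal, $\Tr(GM)=\sum_n\mu_n G_{nn}$ depends only on the diagonal entries; using $G_{nn}=\Tr G^n-\Tr G^{n-1}=|\lambda^n|-|\lambda^{n-1}|$ with $|\lambda^n|:=\sum_k\lambda_k^n$, a telescoping gives $\Tr(GM)=\mu_N|\lambda^N|+\sum_{n<N}(\mu_n-\mu_{n+1})|\lambda^n|$, while $\Tr(G^2)=\sum_k(\lambda_k^N)^2$. The Baryshnikov--Neretin change of variables from matrix entries to minor eigenvalues plus an angular variable on the flag $U(N)/T^N$ has Jacobian $\propto\Delta(\lambda^N)\,\one_{\text{interl.}}$, and since the drift contribution is angular-free the angular integration is trivial, yielding
\begin{equation*}
p_t(\lambda^1,\dotsc,\lambda^N)\propto\one_{\text{interl.}}\,\Delta(\lambda^N)\,\e^{-\sum_k(\lambda_k^N)^2/(2t)+t\mu_N|\lambda^N|}\prod_{n=1}^{N-1}\e^{t(\mu_n-\mu_{n+1})|\lambda^n|}.
\end{equation*}

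Rewriting the top-level factor as a single determinant and each interlacing indicator as a Lindström--Gessel--Viennot determinant (with one virtual variable per level), this density is put in the precise form treated in~\cite{BF08}. The Eynard--Mehta theorem then yields a kernel of the shape \eqref{eq:26} with $\phi^{(n,n')}$ an $(n'-n)$-fold convolution of the elementary transition kernel between consecutive levels, into which the drift weights $\e^{t(\mu_n-\mu_{n+1})|\lambda^n|}$ absorb as exponential shifts in the Laplace variable, producing the integrand $1/\prod_{j=n+1}^{n'}(z-\mu_j)$ of~\eqref{eq:9}. The natural biorthogonal system is $\Psi_{n-k}^{n,t}$, the integral along $\I\R+\mu_-$ of $\e^{tz^2/2-xz}$ against $(z-\mu_{k+1})\cdots(z-\mu_n)$ (automatically Gaussian in $x$), paired with $\Phi_{n-\ell}^{n,t}$, the sum of residues of $\e^{-tw^2/2+xw}/\bigl((w-\mu_\ell)\cdots(w-\mu_n)\bigr)$ at $\mu_\ell,\dotsc,\mu_n$.

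The main technical content is then verifying the two hypotheses fed into~\cite{BF08}: biorthogonality $\int_\R\Psi_{n-k}^{n,t}(x)\Phi_{n-\ell}^{n,t}(x)\,\dx x=\delta_{k,\ell}$, and the consistency relations $\phi^{(n,n')}\ast\Phi_j^{n'}=\Phi_j^n$ (and the analogue for $\Psi$) linking the families at different levels. After substituting the integral representations and exchanging orders of integration, biorthogonality reduces to evaluating $\sum_{i=\ell}^n P(\mu_i)/\prod_{j\in\{\ell,\dotsc,n\}\setminus\{i\}}(\mu_i-\mu_j)$, a divided difference of the polynomial $P(\mu):=(\mu-\mu_{k+1})\cdots(\mu-\mu_n)$ at the nodes $\mu_\ell,\dotsc,\mu_n$: for $k<\ell$ every $P(\mu_i)$ vanishes, for $k>\ell$ the divided difference vanishes by $\deg P=n-k<n-\ell$, and only $k=\ell$ survives and equals $1$; the consistency relations reduce similarly to contour deformations across the poles $\mu_1,\dotsc,\mu_N$. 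The delicate points will be tracking signs and virtual-variable bookkeeping in the LGV step, and managing the interplay between the vertical $z$-contour and the closed $w$-contour when performing the $x$-integration.
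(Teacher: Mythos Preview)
Your outline is correct and follows the paper's overall scheme (density on $\GT_N$ $\to$ product-of-determinants form $\to$ Eynard--Mehta/\cite{BFPS06} $\to$ biorthogonal pair), but you deviate from the paper in two places worth noting. For the density, the paper proceeds inductively: at each step it writes $H^n$ in block form over $H^{n-1}$, diagonalizes the latter, and integrates the resulting unitary with the Harish-Chandra--Itzykson--Zuber formula. Your route is more direct: since $M$ is diagonal, $\Tr(GM)=\sum_n\mu_n G_{nn}$ is a function of the minor spectra alone via $G_{nn}=|\lambda^n|-|\lambda^{n-1}|$, so in Gelfand--Tsetlin coordinates the angular integration is trivial and no HCIZ is needed. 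This is a genuine simplification. Two small corrections, though: the angular fibre in that change of variables is a torus $T^{N(N-1)/2}$ (phases of the off-diagonal vectors at each level), not the flag $U(N)/T^N$; and your displayed density has spurious factors of $t$ in the drift exponents---it should read $\e^{\mu_N|\lambda^N|}\prod_{n<N}\e^{(\mu_n-\mu_{n+1})|\lambda^n|}$, not $\e^{t\mu_N|\lambda^N|}$ etc. For biorthogonality, the paper splits the $x$-integral over $\R_\pm$, shifts the vertical $z$-contour to either side of the $w$-contour, and collapses to a single residue computation; your divided-difference identification of $\sum_i P(\mu_i)/\prod_{j\neq i}(\mu_i-\mu_j)$ is an equivalent and arguably more transparent way to finish once the residues are written out, but you will still need exactly that contour-shifting step to justify the $x$-integration in the first place (as you flag at the end).

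One point you should not omit: both your residue/divided-difference argument and the paper's contour argument tacitly require the $\mu_i$ to be pairwise distinct. The paper closes this gap by observing that both the density and the kernel are analytic in $(\mu_1,\dots,\mu_N)$ and invoking analytic continuation; you should do the same.
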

\begin{rem}
The integral for $\phi^{(n,n')}$ in \eqref{eq:9} is only well-defined for $n'-n>1$. For $n'-n=1$ we set $\phi^{(n-1,n)}(x,x'):=\phi_n(x,x')=\e^{\mu_n(x'-x)}\one_{[x>x']}$ instead.
\end{rem}

In an independent work~\cite{AvMW13} on minors of random matrices by Adler, van Moerbeke, and Wang appeared on the arXiv after this work, the same kernel is computed and a double integral expression is also provided.

\subsubsection*{Warren's process with drifts}
Our second model is Warren's process with drifts that describes the dynamics of a system of Brownian motions $\{B_k^n,1\leq k \leq n \leq N\}$ on $\GT_N$, where $B_1^1$ is a standard Brownian motion with drift $\mu_1$ starting from the origin. The Brownian motions $B_1^2$ and $B_2^2$ are Brownian motions with drifts $\mu_2$ conditioned to start at the origin and, whenever they touch $B_1^1$, they are reflected off $B_1^1$. Similarly for $n\geq 2$, $B_k^n$ is a Brownian motion with drift $\mu_n$ conditioned to start at the origin and being reflected off $B_k^{n-1}$ (for $k\leq n-1$) and $B_{k-1}^{n-1}$ (for $k\geq 2$). The process with $\mu_1=\dotsb=\mu_N=0$ was introduced and studied by Warren in~\cite{War07}.

The correlation functions of this process at a fixed time agree with those of the perturbed GUE minor process (the proof is in Section~\ref{sec:3}).

\begin{thm}\label{thm:2}
For a fixed time $t>0$ consider the point process of the positions of the Brownian motions $\{B_k^n(t):1\leq k\leq n\leq N\}$ described above. Then, its $m$-point correlation function $\varrho^m_t$ is also given by \eqref{eq:24}.
\end{thm}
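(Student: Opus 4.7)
The $m$-point correlation functions at fixed time $t>0$ are determined by the joint density of the configuration on $\GT_N$, so to establish Theorem~\ref{thm:2} it suffices to show that the joint densities at time $t$ on $\GT_N$ of $\{B_k^n(t)\}_{1\le k\le n\le N}$ and of the minor eigenvalues $\{\lambda_k^n(t)\}$ of $G(t)=H(t)+tM$ coincide. Theorem~\ref{thm:1} then transfers the determinantal formula~\eqref{eq:24} with kernel~\eqref{eq:26} to the Warren side.

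My plan is to adapt Warren's argument~\cite{War07} to level-dependent drifts, by induction on $n$. The core ingredient is a one-level intertwining: conditionally on the trajectory of levels $1,\ldots,n-1$ up to time $t$, the joint law of $(B_1^n(t),\ldots,B_n^n(t))$ depends on this trajectory only through the terminal positions $(B_1^{n-1}(t),\ldots,B_{n-1}^{n-1}(t))$, via an explicit Markov kernel
\begin{equation*}
\Lambda_n^t\bigl(x^{n-1},\dx x^n\bigr) = C_n(t,\mu_n)\, \e^{\mu_n\sum_k x_k^n}\, \mathcal{R}_n^t\bigl(x^{n-1},x^n\bigr)\, \one_{\{x^{n-1}\preceq x^n\}}\, \dx x^n,
\end{equation*}
where $\mathcal{R}_n^t$ is the reflection kernel from Warren's zero-drift analysis and $C_n(t,\mu_n)$ is an explicit deterministic factor. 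Iterating $\Lambda_1^t,\ldots,\Lambda_N^t$ produces an explicit product formula for the joint density of Warren's process with drifts on $\GT_N$. On the matrix side, Section~\ref{sec:2} yields the analogous product formula for the joint density of the minor eigenvalues by combining the Harish-Chandra--Itzykson--Zuber integral for the top-level density of $H(t)+tM$ with the standard interlacing density for the lower minors conditional on the top level. Matching level by level, the drift-less parts agree by Warren's theorem in~\cite{War07}, while the drift-dependent modifications (exponential weights $\e^{\mu_n\sum_k x_k^n}$ on the Warren side, external-source terms on the matrix side) are equated by a direct HCIZ-type computation.

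The main obstacle is proving the one-level intertwining with drifts. A natural route is a Girsanov change of measure on the driving Brownian motions $(W_k^n)_k$ of level $n$, which inserts the Radon--Nikodym factor $\prod_{k=1}^n \e^{\mu_n W_k^n(t) - \mu_n^2 t/2}$. Because the Skorohod reflection makes $W_k^n(t)$ differ from $B_k^n(t)-\mu_n t$ by the local times accumulated against the obstacles on level $n-1$, this factor is a priori a functional of the entire reflected trajectory and not only of the terminal positions. The technical heart of the argument is to take the conditional expectation given the endpoints and verify that the local-time contributions cancel pairwise or combine into a deterministic factor, restoring dependence on the endpoints alone and hence the required intertwining. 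If this direct Girsanov manipulation proves too delicate, an alternative is to work at the level of generators, verifying a commutation relation of the form $\mathcal{L}^{(n)}\Lambda_n^t = \Lambda_n^t \mathcal{L}^{(n-1)}$ between the multi-level reflected generator and the kernel $\Lambda_n^t$, as in Warren's original derivation.
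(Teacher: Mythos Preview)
Your approach is genuinely different from the paper's, and the comparison is instructive.

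\textbf{What the paper does.} The paper does not analyse Warren's process with drifts directly at all. Instead it routes the proof through the discrete interacting particle system of Section~\ref{sec:4}. Proposition~\ref{prop1} gives the discrete correlation kernel, and Proposition~\ref{prop:1} shows that under the diffusion scaling~\eqref{scaling} this kernel converges to the kernel $K_\tau$ of Theorem~\ref{thm:1}. Independently, the result of Gorin--Shkolnikov~\cite{GS12} (their Remark~10 covers level-dependent drifts) establishes that the particle system itself converges weakly to Warren's process with drifts. Combining the two limits, the correlation functions of Warren's process with drifts must be given by the limiting kernel $K_\tau$, which is~\eqref{eq:24}. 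The entire argument is a limiting one; no continuous-time intertwining or Girsanov manipulation is needed.

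\textbf{Where your plan has a gap.} You correctly identify the obstacle in the Girsanov route: the Radon--Nikodym factor $\prod_k \e^{\mu_n b_k^n(t)-\mu_n^2 t/2}$ involves the driving Brownian motions, and expressing $\sum_k b_k^n(t)$ in terms of the endpoints $\sum_k B_k^n(t)$ leaves a residual sum of local times $\sum_{k=1}^{n-1} L_{B_k^{n-1}-B_k^n}(t) - \sum_{k=2}^{n} L_{B_k^n-B_{k-1}^{n-1}}(t)$ which does not cancel at a single level. You assert that taking conditional expectation given the endpoints will collapse this to a deterministic factor, but you give no mechanism for why this should happen, and it is far from obvious: the local times are genuine path functionals of both levels. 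Your fallback, a generator commutation $\mathcal L^{(n)}\Lambda_n^t=\Lambda_n^t\mathcal L^{(n-1)}$, is also only stated, not proved, and the paper's own Proposition~\ref{prop:2} together with the remark following it shows that the authors themselves checked the forward equation, initial condition, and boundary condition for the candidate density and concluded that this is \emph{not} sufficient for identification without the backward equation. So the direct PDE/intertwining route is known to be delicate here, and your proposal does not close it.

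\textbf{What each approach buys.} Your route, if completed, would be self-contained and would yield the conditional Markov (intertwining) structure of Warren's process with drifts as a byproduct, which is of independent interest. The paper's route is short and robust: it outsources the hard process-level work to~\cite{GS12} and reduces everything else to the kernel asymptotics already computed for Theorem~\ref{thm:3}. If you want to pursue your line, you would need either a genuine proof that the local-time contribution in the Girsanov factor has a tractable conditional law given the endpoints, or a full two-sided (forward and backward) PDE characterisation---neither of which is in your proposal as it stands.
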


%

\subsubsection*{Interacting particle system}
Finally we introduce a discrete model giving rise to Warren's process with drifts under a diffusion scaling limit. This model is a generalization of TASEP with particle-dependent jump rates~\cite{BF07} to the $2+1$ dimensional particle system with Markov dynamics introduced in~\cite{BF08}. We denote by $x_k^n \in \Z$ the position of a particle labeled by $(k,n)$, with $1\leq k \leq n \leq N$, and call $n$ the ``level'' of the particle. Particle $(k,n)$ performs a continuous time random walk with one-sided jumps (to the right) and with rate $v_n$. Particles with smaller level evolve independently from the ones with higher level like in the Brownian motion model described above. More precisely, the interaction between levels is the following: (a) if particle $(k,n)$ tries to jump to $x$ and $x_{k-1}^{n-1}=x$, then the jump is suppressed, and (b) when particle $(k,n)$ jumps from $x-1$ to $x$, then all particles labeled by $(k+\ell,n+\ell)$ (for some $\ell\geq 1$) which were at $x-1$ are forced to
jump to $x$, too. This is a particle system with state space in a discrete Gelfand-Tsetlin pattern.

Consider the diffusion scaling with appropriate scaled jump rates
\begin{equation}\label{scaling}
t=\tau T,\quad x_k^n=\tau T-\sqrt{T} \lambda_k^n, \quad v_n=1-\frac{\mu_n}{\sqrt{T}}.
\end{equation}
Then, in the $T\to\infty$ limit, the particle process $\{x_k^n(t)\}$ converges to the GUE minor process with drift $\{\lambda_k^n(\tau)\}$.

More precisely, let us denote by $\widetilde \Pb^v$ the probability measure on these particles with jump rates $v=(v_1,\dotsc,v_N)$ given in \eqref{scaling}. We fix $\tau>0$ and set
\begin{equation}
\nu_T(A) = \widetilde \Pb^{v}\biggl(-\frac{x_k^n(\tau T)-\tau T}{\sqrt T} \in A_k^n \text{ for all } 1\leq k\leq n\leq N\biggr)
\end{equation}
where $A_k^n \subseteq \R$ are Borel sets, $A=\prod_{1\leq k\leq n \leq N}A_k^n$. Moreover, we define
\begin{equation}
\nu(A) = \Pb^\mu\bigl(\lambda_k^n(\tau)\in A_k^n \text{ for all } 1\leq k\leq n\leq N \bigr)
\end{equation}
where $\Pb^\mu$ is the GUE minor measure with drift $\diag(\mu_1,\dotsc,\mu_N)$. In Section~\ref{sec:4} we show the following result.
\begin{thm}\label{thm:3}
 As $T\to \infty$, $\nu_T$ converges to $\nu$ in total variation, i.e.,
\begin{equation}\label{eq:31}
\lim_{T\to \infty} \sup_{\substack{A \subseteq \R^{N(N+1)/2},\\ A \, \mathrm{ Borel }}} \lvert \nu_T(A) - \nu(A)\rvert = 0.
\end{equation}
In particular, $\nu_T \to \nu$ weakly.
\end{thm}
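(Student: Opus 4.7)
The plan is to reduce everything to a Scheff\'e-type argument at the level of joint densities. Both $\nu_T$ and $\nu$ are determinantal point processes on the Gelfand--Tsetlin cone $\GT_N$ with exactly $M:=N(N+1)/2$ points, so their joint probability densities, taken with respect to counting measure on the lattice $(\tau T-\Z)/\sqrt T$ in the first case and with respect to Lebesgue measure on $\R^M$ in the second, are both of the form $(M!)^{-1}\det[K(\xi_i,\xi_j)]_{i,j=1}^M$ for the appropriate correlation kernel $K$. It therefore suffices to upgrade pointwise convergence of kernels to pointwise convergence of these determinants and then supply a uniform integrable majorant; this gives $L^1$ convergence of the piecewise-constant extensions of the lattice densities to the continuous density, and hence \eqref{eq:31}.

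First I would invoke the explicit determinantal kernel $K^T$ for the $2+1$-dimensional particle system of \cite{BF08} with inhomogeneous jump rates $v_n$: it is a sum of a convolution-type piece $\phi^T$ and a rank-$n'$ correction $\sum \Psi^T\Phi^T$, each a contour integral containing Poisson transition factors $\e^{-v_n t}(v_n t)^y/y!$ and rational functions of the form $(1-v_k/w)$. Apply the diffusive scaling $x=\tau T-\sqrt T\lambda$, $v_n=1-\mu_n/\sqrt T$, and the substitution $w=1-z/\sqrt T$. The local central limit theorem for the Poisson distribution yields
\begin{equation*}
\sqrt T\,\e^{-v_n\tau T}\frac{(v_n\tau T)^{\tau T-\sqrt T\lambda}}{(\tau T-\sqrt T\lambda)!}\longrightarrow \frac{1}{\sqrt{2\pi\tau}}\,\e^{-(\lambda-\mu_n\tau)^2/(2\tau)},
\end{equation*}
while $1-v_k/w\to (z-\mu_k)/\sqrt T$. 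A bookkeeping of powers of $\sqrt T$ shows that $\sqrt T\,K^T$ converges pointwise on compacts to the kernel $K_\tau$ of Theorem~\ref{thm:1}.

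To turn this into an $L^1$ statement I would deform the $w$-contour to a steepest-descent path through the saddle at $w=1$, obtaining the Gaussian bound
\begin{equation*}
\bigl|\sqrt T\,K^T((\lambda,n),(\lambda',n'))\bigr|\leq C\,\e^{-c(\lambda^2+\lambda'^2)},
\end{equation*}
uniformly in $T\geq T_0$ and in $\lambda,\lambda'$ ranging over the rescaled lattice, together with an analogous bound for the $\phi^T$ piece. Hadamard's inequality then gives $\bigl|\det[\sqrt T\,K^T(\xi_i,\xi_j)]\bigr|\leq C'\prod_{i=1}^M\e^{-c\|\xi_i\|^2}$, which is an integrable majorant. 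Scheff\'e's lemma applied to the piecewise-constant extensions of the lattice densities delivers $L^1$ convergence of densities, and hence total variation convergence of the measures.

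The main obstacle is the uniform Gaussian control of $K^T$ across the \emph{entire} lattice. In the bulk regime where $\lambda,\lambda'=O(1)$ the saddle-point expansion is elementary, but one also needs bounds valid uniformly up to the tails where $\lambda$ is of order $\sqrt T$, so that the Poisson rate $v_n\tau T-\sqrt T\lambda$ is small. Keeping the contour deformations stable in this regime --- in particular keeping the saddle at $w=1$ well-separated from the poles at $w=v_k=1-\mu_k/\sqrt T$ which collide with it as $T\to\infty$ --- is the technical heart of the proof. Once this tail estimate is in place, the pointwise kernel convergence, the Hadamard bound, and Scheff\'e's lemma all combine in a routine manner to yield \eqref{eq:31}.
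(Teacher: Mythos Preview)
Your overall architecture matches the paper's: express both $\nu_T$ and $\nu$ via the $M\times M$ determinant of the (rescaled) correlation kernel, establish pointwise convergence of those determinants, and finish with Scheff\'e. The kernel asymptotics you sketch are also what the paper does; it carries them out in detail as Proposition~\ref{prop:1}, deforming the contour to a vertical line through $1-\mu_-/\sqrt T$, Taylor-expanding at the double critical point $z_c=1$, and obtaining uniform convergence of the rescaled kernel to $K_\tau$ on compact sets.

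Where you diverge is in the step you flag as the ``main obstacle'': a uniform Gaussian bound $\bigl|\sqrt T\,K^T\bigr|\leq C\e^{-c(\lambda^2+\lambda'^2)}$ valid on the \emph{entire} rescaled lattice, fed through Hadamard's inequality to produce an integrable majorant for the determinant. This is unnecessary, and the paper does not attempt it. Scheff\'e's theorem requires no majorant: if $f_T\geq 0$, $\int f_T=1$ for every $T$, $f_T\to f$ almost everywhere, and $\int f=1$, then automatically $\int|f_T-f|\to 0$. Here the piecewise-constant extension of the rescaled lattice density and the density of $\nu$ are both probability densities with respect to Lebesgue measure on $\R^M$, so once you have pointwise convergence of the determinants --- immediate from compact-uniform kernel convergence and continuity of $\det$ --- Scheff\'e finishes the argument in one line. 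The delicate tail control you anticipate (keeping the saddle well-separated from the coalescing poles $w=v_k=1-\mu_k/\sqrt T$ uniformly out to $\lambda\sim\sqrt T$) is therefore a nonissue, and the paper's proof of Theorem~\ref{thm:3} proper is accordingly just a short paragraph after Proposition~\ref{prop:1}.
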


\subsubsection*{Acknowledgments}
The authors are grateful to M. Shkolnikov to point out how to use~\cite{GS12} to show Theorem~\ref{thm:2} and to J. Warren for useful discussions. This work was supported by the German Research Foundation via the SFB~611--A12 and SFB 1060--B04 projects.

\newpage
\section{GUE minor process with drift}\label{sec:2}

\subsection{Model and measure}

Let $(H(t):t\geq 0)$ be a process on the $N\times N$ Hermitian matrices defined by
\begin{equation}
H_{k\ell}(t) =
\begin{cases}
b_{kk}(t)+\mu_k t, & \text{if } 1\leq k \leq N, \\
\frac1{\sqrt{2}} (b_{k\ell}(t)+\I \tilde b_{k\ell}(t)), & \text{if } 1\leq k < \ell \leq N, \\
\frac1{\sqrt{2}} (b_{k\ell}(t)-\I \tilde b_{k\ell}(t)), & \text{if } 1\leq \ell < k \leq N,
\end{cases}
\end{equation}
where $\{b_{kk}, b_{k\ell}, \tilde b_{k\ell}\}$ are independent one-dimensional standard Brownian motions\footnote{Here, standard Brownian motions start from $0$ and are normalized to have variance $t$ at time $t$.}. Denote by $M=\diag(\mu_1,\dots,\mu_N)$ the diagonal drift matrix added to the matrix $H$. Then, the probability measure on these matrices at time $t$ is given by
\begin{equation}\label{eq:2}
\Pb(H\in\dx H)=\const\times \exp\left(-\frac{\Tr(H-t M)^2}{2t}\right)\dx H
\end{equation}
where $\dx H=\prod_{i=1}^N \dx H_{ii}\prod_{1\leq j < k \leq N} \dx\Re(H_{j,k}) \dx\Im(H_{j,k})$ and $\const$ is the normalization constant.

Since we are interested in the statistics of the eigenvalues' minors at time $t$, we first determine the measure on the eigenvalues of the $N\times N$ matrix.
\begin{lem}\label{lem:3}
Assume that $\mu_1,\dotsc,\mu_N$ are all distinct. Then under \eqref{eq:2}, the joint probability measure of the eigenvalues $\lambda_1,\dots,\lambda_N$ of $H$ is given by
\begin{multline}\label{eq:8}
\Pb(\lambda_1 \in \dx \lambda_1, \dots \lambda_N \in \dx\lambda_N)\\  = \const\times \det\bigl[ \e^{-(\lambda_i-t \mu_j)^2/(2t)}\bigr]_{1\le i,j\le N} \, \frac{\Delta(\lambda_1,\dots,\lambda_N)}{\Delta(\mu_1,\dots,\mu_N)} \,\dx \lambda_1\dotsb\dx \lambda_N
\end{multline}
with $\const$ a normalization constant and $\Delta(x_1,\dots,x_m) = \prod_{1\leq i < j \leq m}(x_j-x_i)$ the Vandermonde determinant.
\end{lem}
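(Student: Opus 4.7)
The plan is to diagonalize the matrix $H = U \Lambda U^\ast$ with $\Lambda = \diag(\lambda_1,\dots,\lambda_N)$ and $U \in U(N)$, and apply the Weyl integration formula. Under this change of variables the Lebesgue measure $\dx H$ becomes $\const \cdot \Delta(\lambda_1,\dots,\lambda_N)^2 \,\dx\lambda_1\dotsb\dx\lambda_N \, \dx U$, where $\dx U$ is the Haar measure on $U(N)/T^N$. The joint density of the eigenvalues of $H$ is therefore
\begin{equation*}
\Pb(\lambda \in \dx\lambda) = \const \cdot \Delta(\lambda)^2 \, \dx\lambda \int_{U(N)} \exp\!\Bigl(-\tfrac{1}{2t}\Tr(U\Lambda U^\ast - tM)^2\Bigr)\, \dx U.
\end{equation*}

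Next I would expand the trace in the exponent. Since $\Tr(U\Lambda U^\ast)^2 = \Tr \Lambda^2$ and $\Tr M^2$ does not depend on $U$, the $U$-dependent part is $\exp(\Tr(U\Lambda U^\ast M))$. The Gaussian prefactor $\exp(-\Tr\Lambda^2/(2t))$ together with the constant $\exp(-t\Tr M^2/2)$ can be pulled out of the integral. What remains is the Harish-Chandra--Itzykson--Zuber integral, for which the hypothesis that $\mu_1,\dots,\mu_N$ are all distinct is essential: one has
\begin{equation*}
\int_{U(N)} \exp\bigl(\Tr(U\Lambda U^\ast M)\bigr)\, \dx U = c_N \, \frac{\det\bigl[\e^{\lambda_i \mu_j}\bigr]_{1\leq i,j\leq N}}{\Delta(\lambda)\,\Delta(\mu)}.
\end{equation*}
Substituting back, one factor of $\Delta(\lambda)$ cancels, leaving
\begin{equation*}
\Pb(\lambda \in \dx\lambda) = \const \cdot \frac{\Delta(\lambda)}{\Delta(\mu)} \, \e^{-\sum_i \lambda_i^2/(2t)} \det\bigl[\e^{\lambda_i \mu_j}\bigr]_{i,j} \,\dx\lambda.
\end{equation*}

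Finally I would repackage the Gaussian factors inside the determinant. The factor $\e^{-\lambda_i^2/(2t)}$ depends only on the row index $i$, so it can be absorbed into row $i$ of the determinant; the factor $\e^{-t\mu_j^2/2}$ depends only on the column index $j$, so multiplying the $j$-th column by it produces an overall constant $\prod_j \e^{-t\mu_j^2/2}$ that is absorbed into the normalization. Since
\begin{equation*}
-(\lambda_i - t\mu_j)^2/(2t) = -\lambda_i^2/(2t) + \lambda_i\mu_j - t\mu_j^2/2,
\end{equation*}
one obtains exactly the determinant $\det[\e^{-(\lambda_i - t\mu_j)^2/(2t)}]_{1\leq i,j\leq N}$ appearing in \eqref{eq:8}.

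The only genuine obstacle is the HCIZ step, which is classical but requires the $\mu_j$ to be pairwise distinct (so that $\Delta(\mu) \neq 0$ and the ratio is well-defined); this matches the hypothesis of the lemma. The rest is bookkeeping: tracking the Jacobian of the spectral decomposition and shuffling $j$-only and $i$-only factors in and out of the determinant.
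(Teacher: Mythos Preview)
Your argument is correct and follows essentially the same route as the paper: diagonalize, pick up the Jacobian $\Delta(\lambda)^2$, expand the trace, and apply the Harish-Chandra--Itzykson--Zuber formula under the distinctness hypothesis. The paper stops at ``we obtain the desired expression'' after HCIZ, whereas you spell out the row/column absorption that turns $\e^{-\sum_i \lambda_i^2/(2t)}\det[\e^{\lambda_i\mu_j}]$ into $\det[\e^{-(\lambda_i-t\mu_j)^2/(2t)}]$, but that is only a cosmetic addition.
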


\begin{remark}
If $\mu_1,\dotsc,\mu_N$ are not all distinct, we have to take limits in \eqref{eq:8}. For instance, if $\mu_1=\dotsb=\mu_N\equiv \mu$, then
\begin{multline}
\Pb(\lambda_1 \in \dx \lambda_1, \dots \lambda_N \in \dx\lambda_N)\\  = \const\times \biggl(\prod_{k=1}^N \e^{-(\lambda_k-t \mu)^2/(2t)}\biggr) \Delta^2(\lambda_1,\dots,\lambda_N) \,\dx \lambda_1\dotsb\dx \lambda_N.
\end{multline}
\end{remark}

\begin{proof}[Proof of Lemma~\ref{lem:3}]
We diagonalize $H=U\Lambda U^\ast$ with a unitary matrix $U$ and the diagonal matrix $\Lambda=\diag(\lambda_1,\dotsc,\lambda_N)$. Then,
\begin{equation}\label{eq:6}
\e^{-\Tr(H-t M)^2/(2t)} \dx H = \const \times \e^{-\Tr(U \Lambda U^\ast -tM)^2/(2t)}\Delta^2(\lambda)\, \dx U\,\dx \lambda,
\end{equation}
where $\dx U$ is the Haar measure on the unitary group $\mathcal U$. Moreover, since
\begin{equation}
\Tr (U\Lambda U^\ast - tM)^2 = \Tr \Lambda^2+t^2 \Tr M^2-2t\Tr(U\Lambda U^\ast M)
\end{equation}
by integrating over $\mathcal U$ in \eqref{eq:6} and using the Harish-Chandra-Itzykson-Zuber formula, we obtain the desired expression.
\end{proof}

Now we focus on the minor process. For $1\leq n\leq N$ let us denote by $H^n(t)$ the $n\times n$ principal submatrix of $H(t)$ which is obtained from $H(t)$ by keeping only the $n$ first rows and columns. In particular, $H^1(t)=H_{11}(t)$ and $H^N(t)=H(t)$. We denote by $\lambda_1^n(t)\leq \dots \leq \lambda_n^n(t)$ the ordered eigenvalues of $H^n(t)$. It is then a classical fact of linear algebra that at any time $t$, the process $(\lambda^1,\dots,\lambda^N)(t)$ lies in the Gelfand-Tsetlin cone of order $N$,
\begin{equation}
\GT_N = \{ (x^1,\dots,x^N) \in \R^1 \times \dotsb \times \R^N : x^n\preceq x^{n+1} \text{ for all } 1\leq k \leq N-1\},
\end{equation}
where $x^n \preceq x^{n+1}$ means that $x^n$ and $x^{n+1}$ interlace, i.e.,
\begin{equation}
x_k^{n+1} \leq x^n_k \leq x^{n+1}_{k+1} \quad \text{ for all } 1 \leq k \leq n.
\end{equation}

The induced measure on $\{\lambda_k^n:1\leq k \leq n \leq N\}$ is the following.
\begin{prop}
Fix $t>0$. Then, under the measure \eqref{eq:2}, the joint density of the eigenvalues of $\{H^n:1\leq n \leq N\}$ on $\GT_N$ is given by
\begin{equation}\label{eq:1}
\const\times \prod_{k=1}^N \e^{-t\mu_k^2/2}\prod_{k=1}^N \e^{-(\lambda_k^N)^2/(2t)} \Delta(\lambda^N)\prod_{\substack{1\leq n \leq N \\ 1 \leq k \leq n}} \e^{\mu_n \lambda_k^n} \prod_{\substack{2\leq n\leq N\\1\leq k \leq n-1}} \e^{-\mu_n\lambda_k^{n-1}},
\end{equation}
where the normalization constant does not depend on $\mu_1,\dots,\mu_N$.
\end{prop}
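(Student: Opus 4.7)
The plan is to reduce the computation to the well-known undrifted Baryshnikov density for the GUE minor process by exploiting the fact that $M$ is diagonal, so that the drift contribution to the density of $H$ factors through the minor eigenvalues alone; the $\mu$-independence of the normalization constant will then follow from the same property in the undrifted case.

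The first step is to expand the exponent in \eqref{eq:2} as $\Tr(H-tM)^2=\Tr H^2-2t\Tr(HM)+t^2\Tr M^2$, so that
\begin{equation*}
\e^{-\Tr(H-tM)^2/(2t)}\,\dx H = \e^{-t\Tr M^2/2}\cdot \e^{-\Tr H^2/(2t)}\cdot \e^{\Tr(HM)}\,\dx H.
\end{equation*}
The prefactor $\e^{-t\Tr M^2/2}=\prod_k \e^{-t\mu_k^2/2}$ is already the one explicitly displayed in \eqref{eq:1}. The essential observation is that the cross term depends only on the eigenvalues of the minors and not on the ``angular'' coordinates of the Gelfand--Tsetlin parametrization of $H$: since $M$ is diagonal, $\Tr(HM)=\sum_{n=1}^N \mu_n H_{nn}$, and the $(n,n)$ entry of $H$ satisfies $H_{nn}=\Tr H^n-\Tr H^{n-1}=\sum_{k=1}^n \lambda_k^n-\sum_{k=1}^{n-1} \lambda_k^{n-1}$ (with $\Tr H^0=0$). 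Rearranging the double sum gives
\begin{equation*}
\Tr(HM) = \sum_{n=1}^N\sum_{k=1}^n \mu_n\lambda_k^n \; - \; \sum_{n=2}^N\sum_{k=1}^{n-1}\mu_n\lambda_k^{n-1},
\end{equation*}
which is exactly the logarithm of the last two double products in \eqref{eq:1}.

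The second step is to invoke Baryshnikov's theorem for the undrifted GUE~\cite{Bar01,JN06}: integrating $\e^{-\Tr H^2/(2t)}\,\dx H$ over the angular coordinates at fixed minor eigenvalues yields, on $\GT_N$, the density $C\cdot \e^{-\sum_k (\lambda_k^N)^2/(2t)}\,\Delta(\lambda^N)$, with $C$ a constant independent of $\mu$. Because by the previous step the drift factor $\e^{\Tr(HM)}$ is independent of the angular coordinates, it pulls out of this integration, and combining with the prefactor $\e^{-t\Tr M^2/2}$ assembles the full density on $\GT_N$ into exactly \eqref{eq:1}.

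I do not expect any serious obstacle: the content of the argument is the elementary observation that the diagonal structure of $M$ makes $\Tr(HM)$ a function of the minor eigenvalues only, after which Baryshnikov's formula does the rest. A self-contained alternative, avoiding the appeal to~\cite{Bar01}, would be to carry out the Jacobian for the change of variables from matrix entries to minor eigenvalues plus angles inductively, using the block decomposition of $H^n$ into $H^{n-1}$, a bordering complex vector, and a real diagonal scalar, together with the secular equation expressing the characteristic polynomial of $H^n$ in terms of these data; this is standard but more tedious and not needed here.
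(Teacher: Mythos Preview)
Your proof is correct and takes a genuinely different route from the paper's. The paper argues inductively on $N$: it writes $H^N$ in block form around $H^{N-1}$, diagonalizes $H^{N-1}=U\Lambda U^\ast$, integrates out $U$ via the Harish--Chandra--Itzykson--Zuber formula (assuming distinct $\mu_i$), and then performs the standard change of variables from the bordering vector and diagonal entry to $\lambda^N$. The induction hypothesis then closes the argument; the general case of possibly coinciding $\mu_i$ is recovered by a limit.

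Your argument instead isolates the single structural fact that makes the drift harmless: because $M$ is diagonal, $\Tr(HM)=\sum_n \mu_n H_{nn}$ is a function of the minor eigenvalues alone via $H_{nn}=\Tr H^n-\Tr H^{n-1}$. This lets you peel off the entire $\mu$-dependence before any angular integration, after which Baryshnikov's undrifted result supplies the Jacobian factor $\Delta(\lambda^N)$ on $\GT_N$ in one stroke. The payoff is a shorter proof that works uniformly for all drift vectors (no need to assume distinct $\mu_i$ and take limits) and makes the $\mu$-independence of the constant transparent. The paper's approach, by contrast, is self-contained: it does not outsource the Jacobian computation to \cite{Bar01}. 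Your closing remark about the inductive block-decomposition alternative is in fact exactly the route the paper takes.
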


\begin{proof}
We first derive \eqref{eq:1} under the assumption that the $\mu_1,\dots,\mu_N$ are all distinct; the case where some of the $\mu_i$ are equal is then recovered by taking the limit. We prove the statement inductively and follow the presentation in \cite{FN08b}. For $N=1$, the density is clearly proportial to $\exp(-(\lambda_1^1-\mu_1t)^2/(2t))$. For $N\geq 2$, we consider an $N \times N$ matrix $H^N$ distributed according to \eqref{eq:2} which we write as
\begin{equation}
H^N - t M = \begin{pmatrix} H^{N-1} & w \\ w^\ast & x \end{pmatrix} - t \begin{pmatrix} M^{N-1} & 0 \\ 0 & \mu_N \end{pmatrix},
\end{equation}
where $M^{N-1}$ denotes the $(N-1)\times (N-1)$ principal submatrix of $M$, $w \in \C^{N-1}$ is a Gaussian vector and $x \in \R$ is a Gaussian variable. Then we diagonalize $H^{N-1}$, i.e., we choose a unitary matrix $U$ such that $H^{N-1}=U \Lambda U^\ast$ with $\Lambda = \diag(\lambda^{N-1}_1,\dotsc,\lambda^{N-1}_{N-1})$ the diagonal matrix for the eigenvalues. Since the Gaussian distribution is invariant under unitary rotations and $w$ is independent of $H^{N-1}$, we have
\begin{equation}
\begin{pmatrix} U^\ast & 0 \\ 0 & 1 \end{pmatrix} (H^N-tM) \begin{pmatrix} U & 0 \\ 0 & 1 \end{pmatrix} \overset{d}{=} \begin{pmatrix} \Lambda & w \\ w^\ast & x \end{pmatrix}-t\begin{pmatrix} U^\ast M^{N-1}U & 0 \\ 0 & \mu_N  \end{pmatrix},
\end{equation}
where $\overset{d}{=}$ denotes equality in distribution. Applying the map $H^{N-1} \mapsto (\Lambda,U)$, we get that measure \eqref{eq:2} on $H^N$ is proportional to
\begin{multline}\label{eq:3}
\exp\left(-\frac1{2t}\Tr \left[\begin{pmatrix} \Lambda & w \\ w^\ast & x \end{pmatrix}-t\begin{pmatrix} U^\ast M^{N-1}U & 0 \\ 0 & \mu_N  \end{pmatrix}\right]^2\right) \Delta^2(\lambda^{N-1}) \\
\times \dx U \, \dx w \, \dx x \, \dx\lambda^{N-1},
\end{multline}
where $\dx U$ is the Haar measure on the unitary group $\mathcal U_{N-1}$. We consider only the part of \eqref{eq:3} that depends on $U$ and integrate over $\mathcal U_{N-1}$, using the Harish-Chandra-Itzykson-Zuber formula,
\begin{equation}
\int_{\mathcal U_{N-1}} \dx U \, \e^{\Tr (\Lambda U^\ast M^{N-1} U)} = \const \times \frac{\det[\e^{\lambda_i^{N-1}\mu_j}]_{1\leq i,j\leq N-1}}{\Delta(\lambda^{N-1})\Delta(\mu_1,\dotsc,\mu_{N-1})}.
\end{equation}
After this integration the measure \eqref{eq:3} reads
\begin{equation}\label{eq:4}
\const \times \Pb(\lambda^{N-1}\in\dx \lambda^{N-1}) \, \e^{x \mu_N-t\mu_N^2/2} \prod_{k=1}^{N-1} \e^{-\lvert w_k \rvert^2/t} \dx x\,\dx w.
\end{equation}
We focus on the measure on $w_k$ and represent the variables in polar coordinates, $w_k = r_k\e^{\I\varphi_k}$ with $r_k \in \R_+$ and $\varphi_k \in [0,2\pi)$. Since the Jacobian of this transformation is given by $r_1\dotsb r_{N-1}$, we get
\begin{equation}
\prod_{k=1}^{N-1} \e^{-\lvert w_k\rvert^2/t}\dx w_k = \prod_{k=1}^{N-1} r_k \e^{-r_k^2/t}\dx r_k \,\dx \varphi_k,
\end{equation}
where $\dx r_k$ and $\dx \varphi_k$ are Lebesgue measures on $\R_+$ and $[0,2\pi)$. Then we can express $r_k$ and $x$ in terms of the eigenvalues of $H^{N-1}$ and $H^N$, see e.g.\ \cite{FN08b} for details,
\begin{equation}
\begin{aligned}
r_k^2 &=  - \frac{\prod_{j=1}^N (\lambda_k^{N-1}-\lambda_j^N)}{\prod_{j=1,j\neq k}^{N-1}(\lambda_k^{N-1}-\lambda_j^{N-1})}\, \one_{[\lambda^{N-1} \preceq \lambda^N]},\\
x & = \Tr(H^N-H^{N-1})=\sum_{i=1}^N \lambda_k^N-\sum_{k=1}^{N-1} \lambda_k^{N-1}.
\end{aligned}
\end{equation}
The Jacobian of the transformation $T:(r_1,\dotsc,r_{N-1},x) \mapsto \lambda^N$ is then given by
\begin{equation}
r_1\dotsb r_{N-1}\lvert \det T' \rvert = \frac{\Delta(\lambda^N)}{\Delta(\lambda^{N-1})} \, \one_{[\lambda^{N-1} \preceq \lambda^N]},
\end{equation}
and hence, given $\lambda^{N-1}$, we have
\begin{multline}\label{eq:5}
\e^{x \mu_N} \prod_{k=1}^{N-1} \e^{-\lvert w_k \rvert^2/t} \dx x\,\dx w
=  \prod_{k=1}^N \e^{-(\lambda_k^N)^2/(2t)+\mu_N \lambda_k^N} \prod_{k=1}^{N-1} \e^{(\lambda_k^{N-1})^2/(2t)-\mu_N \lambda_k^{N-1}}\\
\times \frac{\Delta(\lambda^N)}{\Delta(\lambda^{N-1})} \, \one_{[\lambda^{N-1}\preceq \lambda^N]} \, \dx\lambda^N \, \dx \varphi.
\end{multline}
Here we used that $2 (r_1^2+\dotsb+r_{N-1}^2) = \Tr (H^N)^2-\Tr (H^{N-1})^2$. Moreover, by the induction assumption for $N-1$ we have
\begin{multline}\label{eq:30}
\Pb(\lambda^{N-1}\in\dx\lambda^{N-1}) = \const\times \prod_{k=1}^{N-1}\e^{-t \mu_k^2/2} \prod_{k=1}^{N-1} \e^{-(\lambda^{N-1}_k)^2/(2t)}\Delta(\lambda^{N-1}) \\ \times \prod_{\substack{1\leq n\leq N-1 \\ 1\leq k\leq n}} \e^{\mu_n\lambda_k^n}\prod_{\substack{2\leq n\leq N-1\\1\leq k\leq n}}\e^{-\mu_n\lambda_k^{n-1}} \prod_{n=1}^{N-1}\dx \lambda^n.
\end{multline}
Finally, inserting \eqref{eq:5} and \eqref{eq:30} into \eqref{eq:4} and integrating out $\varphi$ (which multiplies the measure by a finite constant) results in the claimed formula \eqref{eq:1}.
\end{proof}

\subsection{Correlation functions}

Now we determine the correlation functions of the point process on the eigenvalues $\{\lambda_k^n:1\leq k\leq n\leq N\}$, and for that purpose, we rewrite the density in \eqref{eq:1} as a product of determinants. We set $\phi_n(x,y)=\e^{\mu_n(y-x)}\one_{\{x>y\}}$ and introduce ``virtual'' variables $\lambda_n^{n-1}=\text{virt}$ with the property that $\phi_n(\text{virt},y)=\e^{\mu_n y}$. Then in \eqref{eq:1} we have, up to a set of measure zero,
\begin{equation}
\det[\phi_n(\lambda_i^{n-1},\lambda_j^n)]_{1\leq i,j\leq n} = \prod_{j=1}^n \e^{\mu_n \lambda_j^n} \prod_{j=1}^{n-1} \e^{-\mu_n \lambda_j^{n-1}} \one_{[\lambda^n \preceq \lambda^{n+1}]}.
\end{equation}
Moreover, for $k=1,\dots,N$ we set
\begin{equation}
\Psi_{N-k}^{N,t}(x)=\frac{\e^{-x^2/(2t)}}{\sqrt{2\pi t}}\,t^{-(N-k)/2}\,p_{N-k}\left(\frac{\mu_{k+1}t-x}{\sqrt t},\dots,\frac{\mu_Nt-x}{\sqrt t}\right),
\end{equation}
where $p_n$  are symmetric polynomials of degree $n$ in $n$ variables defined by $p_0\equiv 1$ and
\begin{equation}
p_n(x_1,\dots,x_n)=\frac{(-1)^n}{\I \sqrt{2\pi}}\int_{\I \R}\dx w\, \e^{w^2/2}(w-x_1)\dotsb(w-x_n) \quad \text{ for } n\geq 1.
\end{equation}
Hence we have that
\begin{equation}
\prod_{k=1}^N \e^{-(\lambda_k^N)^2/(2t)} \Delta(\lambda^N) = \const \times \det\bigl[\Psi^{N,t}_{N-k}(\lambda_\ell^N)\bigr]_{1\leq k,\ell\leq N},
\end{equation}
which means that we can rewrite \eqref{eq:1} as
\begin{equation}\label{eq:10}
\const \times \prod_{n=1}^N \det \bigl[\phi_n (\lambda_i^{n-1},\lambda_j^n)\bigr]_{1\leq i,j\leq n} \prod_{k=1}^N \e^{-t\mu_k^2/2} \det\bigl[\Psi^{N,t}_{N-k}(\lambda_\ell^N)\bigr]_{1\leq k,\ell\leq N} .
\end{equation}
Note that by a change of variable $w = (t z-x)/\sqrt t$ we have
\begin{equation} \label{eq:11}
\Psi_{N-k}^{N,t}(x)=\frac{(-1)^{N-k}}{2\pi\I} \int_{\I\R} \dx z\, \e^{tz^2/2-xz} (z-\mu_{k+1})\dotsb(z-\mu_N).
\end{equation}
A measure of the form \eqref{eq:10} has determinantal correlation functions and the kernel can be computed with Lemma~3.4 of \cite{BFPS06}, see the appendix.
\subsection{Proof of Theorem~\ref{thm:1}}

We prove the theorem first for $\mu_1<\dots<\mu_N$ and then use analytic continuation. Note that for $n=N$, the function $\Psi_{n-k}^{n,t}$ in \eqref{eq:9a} is the same as $\Psi_{N-k}^{N,t}$ in \eqref{eq:11}.

\begin{lem}\label{lem:1} The following identities hold.
\begin{itemize}
\item[(i)] For all $n\in\{1,\dots,N\}$, $k\in\Z$ and $t>0$, we have $\phi_n \ast \Psi^{n,t}_{n-k} = \Psi^{n-1,t}_{n-1-k}$.
\item[(ii)] For $n<n'$, we have $\phi_{n+1}\ast\dots\ast\phi_{n'} = \phi^{(n,n')}$ with $\phi^{(n,n')}$ given in \eqref{eq:9}.
\end{itemize}
\end{lem}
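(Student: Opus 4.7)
The plan is to prove both identities by direct computation with the contour-integral representations, adopting the convention $(\phi_n\ast\Psi)(x) = \int\phi_n(x,y)\Psi(y)\,\dx y$ which matches the way the $\Psi_{n-k}^{n,t}$ are built up by propagating backwards from the top level $N$ down. For part~(ii) one may either induct on $n'-n$ or use a Laplace-transform shortcut; both variants are short.

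For part~(i), I would start from
\begin{equation*}
(\phi_n\ast\Psi_{n-k}^{n,t})(x) = \int_{-\infty}^x \e^{\mu_n(y-x)}\,\Psi_{n-k}^{n,t}(y)\,\dx y,
\end{equation*}
insert the representation \eqref{eq:9a} for $\Psi_{n-k}^{n,t}(y)$, and interchange the two integrals by Fubini (absolute integrability follows from the Gaussian decay of $\e^{tz^2/2}$ along $\I\R+\mu_-$ together with $\mu_n-\mu_->0$). The inner integral evaluates to $\int_{-\infty}^x \e^{y(\mu_n-z)}\,\dx y = \e^{x(\mu_n-z)}/(\mu_n-z)$, which converges because $\Re z=\mu_-<\mu_n$. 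The resulting factor $1/(\mu_n-z)=-(z-\mu_n)^{-1}$ cancels one copy of $(z-\mu_n)$ in the numerator of $\Psi_{n-k}^{n,t}$ and turns the prefactor $(-1)^{n-k}$ into $(-1)^{n-1-k}$; the exponential reduces to $\e^{tz^2/2-xz}$, and the ratio of polynomials is precisely that of $\Psi_{n-1-k}^{n-1,t}$. This works for every $k\in\Z$ since the contour $\I\R+\mu_-$ lies strictly to the left of all of the $\mu_j$, so the integrand of \eqref{eq:9a} is absolutely integrable whether the ratio in its numerator is polynomial or truly rational.

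For part~(ii), I would induct on $d=n'-n\geq 1$. The base case $d=1$ is supplied by the remark after Theorem~\ref{thm:1}, which defines $\phi^{(n,n+1)}:=\phi_{n+1}$. The inductive step $\phi^{(n,n'-1)}\ast\phi_{n'}=\phi^{(n,n')}$ mirrors part~(i): substituting \eqref{eq:9} for $\phi^{(n,n'-1)}(x,u)$, interchanging with $\int_y^\infty \e^{\mu_{n'}(y-u)}\,\dx u$, and using $\Re z=\mu_-<\mu_{n'}$ to evaluate $\int_y^\infty \e^{u(z-\mu_{n'})}\,\dx u = \e^{y(z-\mu_{n'})}/(\mu_{n'}-z)$ produces a factor $-(z-\mu_{n'})^{-1}$, which extends the denominator by one term and promotes the overall sign from $(-1)^{n'-1-n}$ to $(-1)^{n'-n}$. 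A slicker alternative: each $\phi_j$ admits the single-pole representation $\phi_j(x,y) = -(2\pi\I)^{-1}\int_{\I\R+\mu_-}\e^{z(y-x)}(z-\mu_j)^{-1}\,\dx z$, so convolution multiplies the symbols, yielding the total symbol $(-1)^{n'-n}\big/[(z-\mu_{n+1})\dotsb(z-\mu_{n'})]$, which reinserted into the contour integral is exactly \eqref{eq:9}.

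I expect no real obstacle. Both parts reduce to evaluating a one-sided exponential integral whose convergence is guaranteed by the single inequality $\mu_-<\min\{\mu_1,\dotsc,\mu_N\}$ built into the contour. The only bookkeeping that requires some care is the sign-tracking $(-1)^{n-k}\mapsto(-1)^{n-1-k}$ (resp.\ $(-1)^{n'-1-n}\mapsto(-1)^{n'-n}$), and the placement of the indicator $\one_{[n<n']}$ in \eqref{eq:9}, which is consistent with the need to treat $d=1$ separately via the remark.
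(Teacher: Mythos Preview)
Your argument for part~(i) is correct and essentially identical to the paper's. For part~(ii) you use the same mechanism (exchange a one-sided exponential integral with the contour integral, picking up a factor $-(z-\mu_j)^{-1}$), but you induct by right-extension $\phi^{(n,n'-1)}\ast\phi_{n'}=\phi^{(n,n')}$ whereas the paper inducts by left-extension $\phi_n\ast\phi^{(n,n')}=\phi^{(n-1,n')}$. Either direction works.

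There is one small gap in your induction: your inductive step ``substitute \eqref{eq:9} for $\phi^{(n,n'-1)}(x,u)$'' is only available when $n'-1-n\geq 2$, i.e.\ when $d\geq 3$, because the remark after Theorem~\ref{thm:1} tells you that \eqref{eq:9} is \emph{not} well-defined for $n'-n=1$. So the passage from $d=1$ to $d=2$ is not covered; the paper treats $d=2$ as a separate base case by the elementary computation $(\phi_{n-1}\ast\phi_n)(x,x')=-\bigl(\tfrac{\e^{\mu_n(x'-x)}}{\mu_n-\mu_{n-1}}+\tfrac{\e^{\mu_{n-1}(x'-x)}}{\mu_{n-1}-\mu_n}\bigr)\one_{[x>x']}$ and then identifies this with the two-pole contour integral. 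Your ``slicker alternative'' via the single-pole representation $\phi_j(x,y)=-(2\pi\I)^{-1}\int_{\I\R+\mu_-}\e^{z(y-x)}(z-\mu_j)^{-1}\dx z$ does morally close this gap, but note that this integral is not absolutely convergent (the integrand decays only like $1/|z|$), which is precisely why the paper singles out $n'-n=1$; if you want to keep that route rigorous you must interpret it as an improper integral or justify the symbol calculus another way. The cleanest fix is simply to add the $d=2$ base case by hand, as the paper does.
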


\begin{proof}
Because of $\Re z < \mu_n$ we can exchange the two integrals,
\begin{equation}
\begin{aligned}
(\phi_n &\ast \Psi^{n,t}_{n-k})(x) \\
& = \int_{-\infty}^x\dx y\, \e^{\mu_n(y-x)} \frac{(-1)^{n-k}}{2\pi \I} \int_{\I \R+\mu_-} \hspace{-1em} \dx z \, \e^{tz^2/2-yz} \, \frac{(z-\mu_1)\dots (z-\mu_n)}{(z-\mu_1)\dots(z-\mu_k)} \\
& = \frac{(-1)^{n-k}}{2\pi \I} \int_{\I \R+\mu_-} \hspace{-1em} \dx z\, \e^{tz^2/2-\mu_nx} \, \frac{(z-\mu_1)\dots (z-\mu_n)}{(z-\mu_1)\dots(z-\mu_k)} \int_{-\infty}^x\dx y\, \e^{y(\mu_n-z)} \\
& = \frac{(-1)^{n-1-k}}{2\pi \I} \int_{\I\R+\mu_-}\dx z\, \e^{tz^2/2-x z} \, \frac{(z-\mu_1)\dots (z-\mu_{n-1})}{(z-\mu_1)\dots (z-\mu_k)} \\
& = \Psi^{n-1,t}_{n-1-k}(x).
\end{aligned}
\end{equation}
This proves the first statement. To show (ii), we first consider the case $n'-n=2$. A simple calculation gives
\begin{equation}
\phi^{(n-2,n)}(x,x')=(\phi_{n-1} \ast \phi_n)(x,x')= -\left( \frac{\e^{\mu_n(x'-x)}}{\mu_n-\mu_{n-1}}+\frac{\e^{\mu_{n-1}(x'-x)}}{\mu_{n-1}-\mu_n}\right)\one_{[x>x']},
\end{equation}
which has the following contour integral representation,
\begin{equation}
\phi^{(n-2,n)}(x,x') = \frac1{2\pi\I} \int_{\I \R+\mu_-} \dx z\,\frac{\e^{z(x'-x)}}{(z-\mu_{n-1})(z-\mu_n)}.
\end{equation}
For $n'-n>2$, we get inductively that
\begin{equation}
\begin{aligned}
(\phi_n &\ast \phi^{(n,n')})(x,x') \\
& = \frac{(-1)^{n'-n}}{2\pi\I} \int_{-\infty}^x\dx y \, \e^{\mu_n(y-x)} \int_{\I\R+\mu_-} \hspace{-1em} \dx z\, \frac{\e^{z(x'-y)}}{(z-\mu_{n+1})\dotsb(z-\mu_{n'})} \\
& = \frac{(-1)^{n'-n}}{2\pi\I}  \int_{\I\R+\mu_-} \hspace{-1em} \dx z\, \frac{\e^{zx'-\mu_nx}}{(z-\mu_{n+1})\dotsb(z-\mu_{n'})} \int_{-\infty}^x \dx y\, \e^{y(\mu_n-z)} \\
& = \frac{(-1)^{n'-n+1}}{2\pi\I} \int_{\I\R+\mu_-}\dx z\, \frac{\e^{z(x'-x)}}{(z-\mu_n)(z-\mu_{n+1})\dotsb(z-\mu_{n'})} \\
& = \phi^{(n-1,n')}(x,x'),
\end{aligned}
\end{equation}
where, as before, we could exchange the integrals because of $\Re z< \mu_n$.
\end{proof}

Next we consider the $n$-dimensional space $V_n$ spanned by the set of functions
\begin{equation}
\{\phi_1 \ast \phi^{(1,n)}(x_1^0,\cdot),\dots,\phi_{n-1}\ast \phi^{(n-1,n)}(x_{n-1}^{n-2},\cdot),\phi_n(x_n^{n-1},\cdot)\}.
\end{equation}
According to Lemma~3.4 of \cite{BFPS06} we need to find a basis $\{\Phi_{n-k}^{n,t}:1\leq k\leq n\}$ of $V_n$
that is biorthogonal to the set $\{\Psi_{n-k}^{n,t}:1\leq k\leq n\}$, i.e.,
\begin{equation}
\int_\R \dx x \, \Psi_{n-k}^{n,t}(x)\Phi_{n-\ell}^{n,t}(x) = \delta_{k\ell}, \quad 1 \leq k,\ell\leq n.
\end{equation}
The form of the biorthogonal functions can be guessed, with some experience, from the form of the kernel~\cite{BP07}.

\begin{lem}\label{lem:1b} We have:
\begin{itemize}
\item[(i)] $V_n$ is spanned by $\{x \mapsto \e^{\mu_k x}: 1\leq k \leq n\}$.
\item[(ii)] The functions $\{\Phi_{n-k}^{n,t}:1\leq k\leq n\}$ are given by \eqref{eq:9b}.
\end{itemize}
\end{lem}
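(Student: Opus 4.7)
For \emph{part (i)}, the arguments $x_k^{k-1}$ are ``virtual'' variables in the sense introduced in the paper, so $\phi_k(x_k^{k-1}, z) = \e^{\mu_k z}$ (no indicator). For $1 \le k \le n-1$ I would compute
\begin{equation*}
(\phi_k \ast \phi^{(k,n)})(x_k^{k-1}, y) = \int_\R \dx z \, \e^{\mu_k z}\, \phi^{(k,n)}(z, y)
\end{equation*}
by expanding $\phi^{(k,n)}(z, y)$ via the residue theorem as a sum over the poles $z = \mu_{k+1}, \ldots, \mu_n$ and carrying out the $z$-integral over $(y, \infty)$ (each piece converges because $\mu_j > \mu_k$ when $j > k$). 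After combining exponentials every summand produces $\e^{\mu_k y}$, and the Lagrange-interpolation identity
\begin{equation*}
\sum_{j=k}^n \frac{1}{\prod_{k \le i \le n,\, i \ne j}(\mu_j - \mu_i)} = 0
\end{equation*}
collapses the resulting sum to a \emph{nonzero} scalar multiple of $\e^{\mu_k y}$. Combined with $\phi_n(x_n^{n-1}, y) = \e^{\mu_n y}$ for $k = n$, the $n$ generators of $V_n$ are nonzero scalar multiples of $\e^{\mu_1 y}, \ldots, \e^{\mu_n y}$, whence $V_n = \lin\{\e^{\mu_k y} : 1 \le k \le n\}$.

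For \emph{part (ii)}, the same residue calculus applied to \eqref{eq:9b} yields
\begin{equation*}
\Phi^{n,t}_{n-\ell}(x) = (-1)^{n-\ell} \sum_{j=\ell}^n \frac{\e^{-t\mu_j^2/2 + x \mu_j}}{\prod_{\ell \le i \le n,\, i \ne j}(\mu_j - \mu_i)},
\end{equation*}
so $\Phi^{n,t}_{n-\ell} \in \lin\{\e^{\mu_j x} : \ell \le j \le n\} \subset V_n$ by (i). The biorthogonality then hinges on the auxiliary Laplace-transform identity
\begin{equation*}
\int_\R \e^{\alpha x}\, \Psi^{n,t}_{n-k}(x) \, \dx x = (-1)^{n-k}\, \e^{t\alpha^2/2}\, (\alpha - \mu_{k+1}) \cdots (\alpha - \mu_n), \qquad \alpha \in \R.
\end{equation*}
A naive Fubini swap fails since $\int_\R \e^{x(\alpha-z)}\dx x$ diverges, so I would split $\int_\R = \int_0^\infty + \int_{-\infty}^0$ and deform the $z$-contour in \eqref{eq:9a} to $\Re z > \alpha$ on the first piece (making $\int_0^\infty \e^{x(\alpha-z)}\dx x = (z-\alpha)^{-1}$ legitimate) and to $\Re z < \alpha$ on the second (giving $-(z-\alpha)^{-1}$). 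The two deformed contours differ by a small loop around $z = \alpha$, and the residue theorem produces the claimed formula.

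Substituting the residue expansion of $\Phi^{n,t}_{n-\ell}$ and applying the auxiliary identity at each $\alpha = \mu_j$ collapses the biorthogonality integral to
\begin{equation*}
\int_\R \Psi^{n,t}_{n-k}(x)\, \Phi^{n,t}_{n-\ell}(x) \, \dx x = (-1)^{k+\ell} \sum_{j=\ell}^n \frac{\prod_{i=k+1}^n (\mu_j - \mu_i)}{\prod_{\ell \le i \le n,\, i \ne j}(\mu_j - \mu_i)},
\end{equation*}
which equals $\delta_{k\ell}$ by case analysis: if $k < \ell$, every numerator contains the zero factor $\mu_j - \mu_j$ since $j \in \{\ell, \ldots, n\} \subset \{k+1, \ldots, n\}$; if $k > \ell$, only $j \in \{\ell, \ldots, k\}$ contribute and the reduced sum $\sum_{j=\ell}^k (\prod_{\ell \le i \le k,\, i \ne j}(\mu_j - \mu_i))^{-1}$ vanishes by the same Lagrange identity applied to the nodes $\mu_\ell, \ldots, \mu_k$; if $k = \ell$, only $j = k$ survives and the summand equals $1$. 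The main technical obstacle is establishing the auxiliary Laplace-transform identity: the non-absolutely-convergent $x$-integral must be regularised by the contour-splitting device above. Beyond that, the proof reduces to residue calculus and partial-fraction bookkeeping.
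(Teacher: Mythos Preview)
Your proposal is correct and will go through as written. The approach, however, differs in execution from the paper's.

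The paper stays in contour-integral form throughout. For (i), it writes $(\phi_k\ast\phi^{(k,n)})(x_k^{k-1},x)$ as a double integral, splits the $y$-integral over $\R_+$ and $\R_-$, shifts the $z$-contour to $\Re z<\mu_k$ (resp.\ $\mu_k<\Re z<\mu_{k+1}$) on the two pieces, and obtains directly a small loop $\oint_{\Gamma_{\mu_k}}$, which is manifestly a nonzero multiple of $\e^{\mu_k x}$ --- no Lagrange identity needed. For (ii), it keeps both $\Psi$ and $\Phi$ as contour integrals, performs the same $\R_+/\R_-$ split on the $x$-integral, and collapses the resulting double contour integral to
\[
\frac{(-1)^{k+\ell}}{2\pi\I}\oint_{\Gamma_{\mu_\ell,\dots,\mu_n}}\dx w\,\frac{(w-\mu_{k+1})\dotsb(w-\mu_n)}{(w-\mu_\ell)\dotsb(w-\mu_n)}=\delta_{k\ell},
\]
again with no case analysis.

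Your route instead expands $\phi^{(k,n)}$ and $\Phi^{n,t}_{n-\ell}$ into explicit exponential sums via residues and handles the resulting algebra with Lagrange-interpolation identities; your auxiliary Laplace-transform identity for $\Psi$ is established by exactly the same contour-splitting device the paper uses. Both arguments are valid; the paper's is shorter and avoids the three-way case split, while yours is more explicit and makes the triangular structure (that $\Phi^{n,t}_{n-\ell}$ involves only $\e^{\mu_\ell x},\dots,\e^{\mu_n x}$) visible from the outset. One small point: you should state explicitly that biorthogonality forces the $n$ functions $\Phi^{n,t}_{n-\ell}$ to be linearly independent, hence to span the $n$-dimensional $V_n$; the paper notes this at the end of its proof.
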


\begin{proof}
For any $\eps>0$ we have
\begin{multline}
(\phi_k \ast \phi^{(k,n)})(x_k^{k-1},x) \\
 = \int_\R \dx y\, \e^{\mu_k y} \frac{(-1)^{n-k}}{2\pi\I}\int_{\I \R+\mu_{k+1}-\eps} \dx z\,\frac{\e^{z(x-y)}}{(z-\mu_{k+1})\dotsb(z-\mu_n)}.
\end{multline}
We split the $y$-integral into one over $\R_+$ and one over $\R_-$. Then we can exchange the integrals over $\R_-$ and the imaginary axis provided that \mbox{$\Re z < \mu_k$} and use $\int_{\R_-} \dx y \, \e^{y(\mu_k-z)}=\frac{1}{z-\mu_k}$. In the same way we integrate over $\R_+$ taking $z$ such that $\mu_k<\Re z < \mu_{k+1}$. This gives $\int_{\R_+}\dx y\,\e^{y(\mu_k-z)}=-\frac1{z-\mu_k}$. Putting these two integrals together we get
\begin{equation}
(\phi_k \ast \phi^{(k,n)})(x_k^{k-1+1},x) = \frac{(-1)^{n-k}}{2\pi\I} \oint_{\Gamma_{\mu_k}}\dx z\, \frac{\e^{x z}}{(z-\mu_k)(z-\mu_{k+1})\dotsb(z-\mu_n)},
\end{equation}
which is a constant multiple of $\e^{\mu_k x}$. This proves (i). For (ii) we proceed similarly. Using that $1\leq k,\ell\leq n$ we have
\begin{multline}\label{eq:12}
\int_\R \dx x \, \Psi_{n-k}^{n,t}(x)\Phi_{n-\ell}^{n,t}(x) \\
= \frac{(-1)^{k+\ell}}{(2\pi\I)^2} \int_\R\dx x\int_{\I \R} \dx z \oint_{\Gamma_{\mu_\ell,\dots,\mu_n}}\dx w\, \frac{\e^{tz^2-x z}}{\e^{tw^2-x w}}\,\frac{(z-\mu_{k+1})\dotsb(z-\mu_n)}{(w-\mu_\ell)\dotsb(w-\mu_n)}.
\end{multline}
When integrating $x$ over $\R_-$, we take the $z$-integral such that $\Re z<\Re w$, and when we integrate $x$ over $\R_+$, we choose $\Re z> \Re w$. Thus, \eqref{eq:12} reduces to
\begin{equation}
\frac{(-1)^{k+\ell}}{2\pi\I} \oint_{\Gamma_{\mu_\ell,\dots,\mu_n}} \dx w\, \frac{(w-\mu_{k+1})\dotsb(w-\mu_n)}{(w-\mu_\ell)\dotsb(w-\mu_n)} = \delta_{k\ell}.
\end{equation}
Finally, note that
\begin{equation}
\Phi_{n-\ell}^{n,t}(x)= \sum_{i=\ell}^n b_i \e^{\mu_i x}\quad \text{ with} \quad b_i= \prod_{\substack{j=\ell\\j\neq i}}^n \frac{\e^{-t \mu_i^2/2}}{\mu_i-\mu_j}.
\end{equation}
which shows that the set $\{\Phi_{n-k}^{n,t}:1\leq k\leq n\}$ spans $V_n$.
\end{proof}
Next we verify Assumption (A) from Lemma~3.4 in \cite{BFPS06}. Indeed,
\begin{equation}
\Phi_0^{n,t}(x) = \frac1{2\pi\I}\oint_{\Gamma_{\mu_n}} \dx w\, \frac{\e^{-tw^2/2+xw}}{w-\mu_n} = c_n \phi_n(x_n^{n-1},x)
\end{equation}
with $c_n=\e^{-t\mu_n^2/2}\neq 0$ for $n=1,\dots,N$.

Finally, we can also determine the value of the normalization constant in \eqref{eq:10}, since it is given by $1/\det [M_{k\ell}]_{1\leq k,\ell\leq N}$ with
\begin{equation}
M_{k\ell}=(\phi_k \ast \dotsb \ast \phi_N \ast \Psi_{N-\ell}^{N,t})(x_k^{k-1}).
\end{equation}

\begin{lem}\label{lem:2}
We have $\det M = \prod_{n=1}^N \e^{t\mu_n^2/2}$, in particular $\det M >0$.
\end{lem}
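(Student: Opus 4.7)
The plan is to exploit the convolution identity from Lemma~\ref{lem:1}(i) to collapse the long chain of convolutions $\phi_k \ast \dotsb \ast \phi_N \ast \Psi_{N-\ell}^{N,t}$ down to a single $\Psi$ on level $k$, and then recognise the evaluation at the virtual variable $x_k^{k-1}$ as a biorthogonality pairing with $\Phi_0^{k,t}$. This should force $M$ to be diagonal, at which point the determinant is immediate.

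Concretely, I would first iterate Lemma~\ref{lem:1}(i), namely $\phi_n \ast \Psi_{n-j}^{n,t} = \Psi_{n-1-j}^{n-1,t}$, starting from $\Psi_{N-\ell}^{N,t}$ and peeling off $\phi_N, \phi_{N-1}, \dotsc, \phi_{k+1}$ in turn. This gives
\begin{equation*}
\phi_{k+1} \ast \dotsb \ast \phi_N \ast \Psi_{N-\ell}^{N,t} = \Psi_{k-\ell}^{k,t},
\end{equation*}
so that $M_{k\ell} = \bigl(\phi_k \ast \Psi_{k-\ell}^{k,t}\bigr)(x_k^{k-1})$. Now the convention $\phi_k(x_k^{k-1},y) = e^{\mu_k y}$ turns the remaining convolution into the honest integral $M_{k\ell} = \int_\R e^{\mu_k y}\,\Psi_{k-\ell}^{k,t}(y)\,\dx y$.

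Next, I would recall from the verification of assumption (A) just above the Lemma that $\Phi_0^{k,t}(x) = e^{-t\mu_k^2/2}\,e^{\mu_k x}$, obtained by computing the contour integral \eqref{eq:9b} with $\ell = k$ by residues at the single pole $w=\mu_k$. Rewriting $e^{\mu_k y} = e^{t\mu_k^2/2}\,\Phi_0^{k,t}(y)$ gives
\begin{equation*}
M_{k\ell} = e^{t\mu_k^2/2}\int_\R \Psi_{k-\ell}^{k,t}(y)\,\Phi_{k-k}^{k,t}(y)\,\dx y = e^{t\mu_k^2/2}\,\delta_{k\ell},
\end{equation*}
by the biorthogonality established in Lemma~\ref{lem:1b}(ii). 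Thus $M$ is diagonal and $\det M = \prod_{n=1}^N e^{t\mu_n^2/2} > 0$.

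The only subtle point I anticipate is a brief justification of the absolutely convergent manipulations at the virtual-variable step: the integral $\int_\R e^{\mu_k y}\Psi_{k-\ell}^{k,t}(y)\,\dx y$ is not absolutely convergent if one naively swaps integrals with \eqref{eq:9a}, but once rewritten as the $\Psi$--$\Phi$ pairing the convergence is the same one already used in the proof of Lemma~\ref{lem:1b}(ii) (splitting the $y$-integral into $\R_\pm$ and choosing the $z$-contour on either side of $\Re z = \mu_k$). Apart from this bookkeeping, no genuine obstacle arises.
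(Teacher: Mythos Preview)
Your argument is essentially the paper's own: both reduce $M_{k\ell}$ to $(\phi_k \ast \Psi_{k-\ell}^{k,t})(x_k^{k-1}) = \int_\R e^{\mu_k y}\,\Psi_{k-\ell}^{k,t}(y)\,\dx y$ via Lemma~\ref{lem:1}(i) and then evaluate this integral by the same contour-shifting device; you simply package that evaluation as a citation of the biorthogonality in Lemma~\ref{lem:1b}(ii) rather than redoing the computation.

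One small over-claim: the biorthogonality of Lemma~\ref{lem:1b}(ii) is stated for indices in $\{1,\dotsc,n\}$, so on level $n=k$ it only covers $1\leq \ell\leq k$. This gives $M_{k\ell}=0$ for $\ell<k$ and $M_{kk}=e^{t\mu_k^2/2}$, i.e.\ $M$ is \emph{upper triangular} (as the paper concludes), not diagonal---and indeed $M_{k\ell}\neq 0$ in general for $\ell>k$, as a direct residue computation shows. This does not affect $\det M$, so your conclusion is unaffected; just replace ``diagonal'' by ``upper triangular''.
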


\begin{proof} By Lemma~\ref{lem:1} (i)  we may write $M_{k\ell} = (\phi_k \ast \Psi_{k-\ell}^{k,t})(x_k^{k-1})$. Thus, for $k\geq \ell$,
\begin{equation}
M_{k\ell} = \int_\R \dx y\, \e^{\mu_k y}\, \frac{(-1)^{k-\ell}}{2\pi \I} \int_{\I\R} \dx z\, \e^{tz^2/2-y z}\, (z-\mu_{\ell+1})\dotsb(z-\mu_k).
\end{equation}
Once again, we let run the $y$-integral over $\R_-$ and $\R_+$ separately. In the first case we take the $z$-integral such that $\Re z < \mu_k$, in the second case such that $\Re z > \mu_k$. This allows us to exchange the integrals, which gives
\begin{equation}
M_{k\ell} = \frac{(-1)^{k-\ell}}{2 \pi \I} \oint_{\Gamma_{\mu_k}} \dx z\, \e^{tz^2/2} \, \frac{(z-\mu_{\ell+1})\dotsb (z-\mu_k)}{z-\mu_k}.
\end{equation}
Since the integrand has no poles for $k>\ell$, we have $M_{k\ell}=0$ in this case, while for $k=\ell$ we get $M_{kk} = \e^{-t \mu_k^2/2}$. Thus, $M$ is upper triangular and the claim follows.
\end{proof}
With the results of Lemma~\ref{lem:1} and Lemma~\ref{lem:1b}, Theorem~\ref{thm:1} follows directly from Lemma~3.4 of \cite{BFPS06}.

We have shown that Theorem~\ref{thm:1} holds when we impose $\mu_1 < \dotsb < \mu_N$. In particular, the joint density (\ref{eq:1}) is given by an $(N(N+1)/2)$-point correlation function: With $m=N(N+1)/2$ we have
\begin{equation}\label{eq42}
(\ref{eq:1})=m!\, \rho^{(m)}(\{(\lambda_k^n,n),1\leq k \leq n \leq N\}).
\end{equation}
Let $M>0$ be any fixed real number. The density (\ref{eq:1}) is analytic in each of the $\mu_j$ in $[-M,M]$, $j=1,\dotsc,N$. The same holds for the correlation kernel (take e.g., $\mu_-=-M-1$). From this it follows that also the r.h.s.\ of (\ref{eq42}) is analytic in each of the variables $\mu_1,\dotsc,\mu_N$. Since this holds for any $M$, by analytic continuation it follows that Theorem~\ref{thm:1} holds for any given drift vector $(\mu_1,\dotsc,\mu_N)\in\R^N$.

\newpage
\section{$2+1$ dynamics with different jump rates}\label{sec:4}

In this section we show that the correlation functions \eqref{eq:24} that we obtained for the GUE matrix diffusion with drifts can be obtained as a limit from an $\GT_N$-extension of TASEP with particle-dependent jump rates. This latter process was introduced in~\cite{BF08}. Before we come to the convergence result, let us describe the model.

\begin{figure}
\begin{center}
\psfrag{x11}[cl]{\small $x_1^1$}
\psfrag{x12}[cl]{\small $x_1^2$}
\psfrag{x13}[cl]{\small $x_1^3$}
\psfrag{x14}[cl]{\small $x_1^4$}
\psfrag{x15}[cl]{\small $x_1^5$}
\psfrag{x22}[cl]{\small $x_2^2$}
\psfrag{x23}[cl]{\small $x_2^3$}
\psfrag{x24}[cl]{\small $x_2^4$}
\psfrag{x25}[cl]{\small $x_2^5$}
\psfrag{x33}[cl]{\small $x_3^3$}
\psfrag{x34}[cl]{\small $x_3^4$}
\psfrag{x35}[cl]{\small $x_3^5$}
\psfrag{x44}[cl]{\small $x_4^4$}
\psfrag{x45}[cl]{\small $x_4^5$}
\psfrag{x55}[cl]{\small $x_5^5$}
\psfrag{n=1}[cl]{\small $n=1$}
\psfrag{n=2}[cl]{\small $n=2$}
\psfrag{n=3}[cl]{\small $n=3$}
\psfrag{n=4}[cl]{\small $n=4$}
\psfrag{n=5}[cl]{\small $n=5$}
\includegraphics[width=\textwidth]{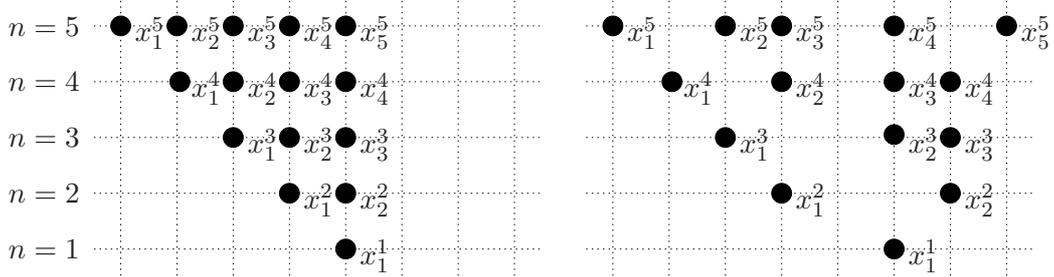}
\caption{(Left) Initial particles configuration. (Right) A possible particles configuration after some time; in this configuration, if particle $(1,3)$ tries to jump, the move is suppressed (blocked by particle $(1,2)$), while if particle $(2,2)$ jumps, then also particles $(3,3)$ and $(4,4)$ move by one unit to the right. Particles at level $n$ have a jump rate $v_n$.}
\label{FigureParticles}
\end{center}
\end{figure}

At a fixed time $t$, let us denote by $x(t)=(x_k^n(t))_{1\leq k\leq n \leq N} \in \GT_N$ the positions of the $N(N+1)/2$ particles at time $t$. We choose initial conditions $x_k^n(0)=k-n-1$ and let the particles evolve as follows: Each particle $x_k^n$ has an independent exponential clock of rate $v_n>0$, i.e., particles on the same level have the same jump rates. When the $x_k^n$-clock rings, the particle jumps to the right by one, provided that $x_k^n < x_k^{n-1}-1$, otherwise we say that $x_k^n$ is blocked by $x_k^{n-1}$. If the $x_k^n$-particle can jump, we take the largest $c\geq 1$ such that $x_k^n = x_{k+1}^{n+1}=\dotsb=x_{k+c-1}^{n+c-1}$, and all $c$ particles in this string jump to the right by one, see Figure~\ref{FigureParticles} for an example). This ensures that at any time $t$, all the particles are in $\GT_N$. More precisely, these dynamics imply that the particles stay in a discrete version of $\GT_N$, namely
\begin{equation}
\widetilde \GT_N=\{(x^1,x^2,\dotsc,x^N) \in \Z^1 \times \Z^2 \times \dotsb \times \Z^N : x_k^{n+1} < x_k^n \leq x_{k+1}^{n+1} \}.
\end{equation}

The joint distribution of the particles has been calculated in Theorem~4.1 of~\cite{BF07}, and the result is
\begin{equation}\label{eq:22}
\const \times \det\bigl[\widetilde \Psi^{N,t}_{N-k}(x_\ell^N)\bigr]_{1\leq k,\ell\leq N} \prod_{n=1}^N \det\bigl[\widetilde \phi_n(x_i^{n-1},x_j^n)\bigr]_{1\leq i,j\leq n},
\end{equation}
where
\begin{equation}
\begin{aligned}
\widetilde \Psi^{N,t}_{N-k}(x) & = \frac1{2\pi\I} \oint_{\Gamma_0}\dx z\, \e^{t/z} z^{x+N-1}(1-v_{k+1}z)\dotsb(1-v_Nz), \\
\widetilde \phi_n(x,y) & = (v_n)^{y-x}\one_{[y\geq x]} \quad \text{ and } \quad \widetilde \phi_n(x_n^{n-1},y)=(v_n)^y.
\end{aligned}
\end{equation}
Actually, Theorem~4.1 of~\cite{BF07} is a statement about the marginal of a (possibly signed) measure. However, this model is the continuous time limit of a generic Markov chain introduced in Section~2 of~\cite{BF08}, from which it follows that the measure with fully packed initial conditions $y_n = x_1^n(0)=-n$ for $1\leq n \leq N$ is actually a probability distribution. The formulation of \eqref{eq:22} follows then from the theorem by taking $a(t)=t$ and $b(t)=0$ for all $t\geq 0$. Also note that we put the transition from time $t=0$ to time $t$ (which is encoded by $\mathcal T_{t,0}$ in the theorem) into $\Psi_{N-k}^N$. As shown in~\cite{BF07}, the correlation functions of this point process are determinantal, so what remains to do is the biorthogonalization for the generic jump rates.

\begin{prop}\label{prop1}
Consider a system of particles on $\widetilde \GT_N$ with fully packed initial conditions and dynamics described above. Then, at fixed time $t$, the corresponding point process has $m$-point correlation function $\tilde \varrho^m_t$ given by
\begin{equation}
\tilde \varrho^m_t((x_1,n_1),\dotsc,(x_m,n_m)) = \det[\widetilde K^v_t((x_i,n_i),(x_j,n_j))]_{1\leq i,j\leq m}
\end{equation}
with $(x_i,n_i)\in \R \times \{1,\dotsc,N\}$ and correlation kernel
\begin{equation}
\widetilde K^v_t((x,n),(x',n'))= - \widetilde \phi^{(n,n')}(x,x')+\sum_{k=1}^{n'} \widetilde \Psi_{n-k}^{n,t}(x) \widetilde \Phi_{n'-k}^{n',t}(x'),
\end{equation}
where
\begin{align}
\widetilde \phi^{(n,n')}(x,x') & = \frac{1}{2\pi\I} \oint_{\Gamma_{0,v}} \dx z\, \frac{1}{z^{x-x'+1}}\, \frac{z^{n'-n}}{(z-v_{n+1})\dotsb(z-v_{n'})} \, \one_{[n<n']} \label{eq:27}, \\
\widetilde \Psi_{n-k}^{n,t}(x) & = \frac{1}{2\pi\I} \oint_{\Gamma_{0,v}} \dx z \, \frac{\e^{tz}}{z^{x+n+1}} \, \frac{(z-v_1)\dotsb(z-v_n)}{(z-v_1)\dotsb(z-v_k)}, \label{eq:23} \\
\widetilde \Phi_{n-\ell}^{n,t}(x) & = \frac{1}{2\pi\I} \oint_{\Gamma_v}\dx w \, \frac{w^{x+n}}{\e^{t w}} \, \frac{(w-v_1)\dotsb(w-v_{\ell-1})}{(w-v_1)\dotsb(w-v_n)}.
\end{align}
\end{prop}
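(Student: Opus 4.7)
The plan is to follow the same scheme as for the proof of Theorem~\ref{thm:1}. The joint distribution~\eqref{eq:22} is already of the product-of-determinants form required by Lemma~3.4 of~\cite{BFPS06}, so the task reduces to (a) constructing functions $\{\widetilde\Phi^{n,t}_{n-\ell}\}_{\ell=1}^n$ biorthogonal to $\{\widetilde\Psi^{n,t}_{n-k}\}_{k=1}^n$ and spanning the space $\widetilde V_n$ generated by $\{\widetilde\phi_j\ast\widetilde\phi^{(j,n)}(x_j^{j-1},\cdot):1\le j\le n-1\}\cup\{\widetilde\phi_n(x_n^{n-1},\cdot)\}$, (b) verifying Assumption~(A) of~\cite{BFPS06}, and (c) checking that the normalization in~\eqref{eq:22} is positive.

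First, I would establish the discrete analogues of Lemma~\ref{lem:1}: (i) $\widetilde\phi_n\ast\widetilde\Psi^{n,t}_{n-k}=\widetilde\Psi^{n-1,t}_{n-1-k}$ and (ii) $\widetilde\phi_{n+1}\ast\dotsb\ast\widetilde\phi_{n'}=\widetilde\phi^{(n,n')}$. In each case the proof amounts to inserting the contour representation and evaluating a geometric sum $\sum_{y\ge x}(v_n/z)^{y-x}=z/(z-v_n)$, which converges provided the contour is deformed so that $|v_n|<|z|$. This factor is precisely what produces the $(z-v_n)$ in the denominator of~\eqref{eq:23}, and iterating gives the factor $z^{n'-n}/[(z-v_{n+1})\dotsb(z-v_{n'})]$ in~\eqref{eq:27}.

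Next, to identify $\widetilde V_n$, I would compute $(\widetilde\phi_k\ast\widetilde\phi^{(k,n)})(x_k^{k-1},x)$ using the convention $\widetilde\phi_k(x_k^{k-1},y)=v_k^y$: splitting the $y$-sum over $\Z_-$ and $\Z_+$ and choosing two contours for convergence of the two geometric series (one with $|z|>v_k$, one with $|z|<v_k$), the difference collapses to a single residue at $z=v_k$, so $(\widetilde\phi_k\ast\widetilde\phi^{(k,n)})(x_k^{k-1},\cdot)$ is a nonzero multiple of $v_k^x$. Hence $\widetilde V_n=\lin\{x\mapsto v_k^x:1\le k\le n\}$. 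The biorthogonality $\sum_{x\in\Z}\widetilde\Psi^{n,t}_{n-k}(x)\widetilde\Phi^{n,t}_{n-\ell}(x)=\delta_{k\ell}$ is then verified by the same splitting of the $x$-sum and an appropriate choice of the relative positions of the $z$- and $w$-contours on each half; the $x$-sum collapses and one is left with a single residue computation at $w\in\{v_\ell,\dotsc,v_n\}$ which gives $\delta_{k\ell}$. Since each $\widetilde\Phi^{n,t}_{n-\ell}$ is a linear combination of the $v_k^x$, this family also spans $\widetilde V_n$.

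Finally, Assumption~(A) is checked by computing $\widetilde\Phi^{n,t}_0(x)$ as a single residue at $w=v_n$, which yields a nonzero multiple of $\widetilde\phi_n(x_n^{n-1},x)=v_n^x$; positivity of the normalization $\det[(\widetilde\phi_k\ast\dotsb\ast\widetilde\phi_N\ast\widetilde\Psi^{N,t}_{N-\ell})(x_k^{k-1})]$ follows by the same upper-triangular argument as in Lemma~\ref{lem:2}, the diagonal entries reducing to one-pole residues at $z=v_k$. I expect the main technical obstacle to be the careful bookkeeping of contour positions when exchanging the discrete sums with the contour integrals: each geometric series converges only on one side of $|z|=v_k$, so contours must be chosen precisely on each half-sum and every deformation across a pole $v_k$ must be tracked. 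This is what ultimately determines which poles are enclosed by $\Gamma_{0,v}$ versus $\Gamma_v$ in the final formulas.
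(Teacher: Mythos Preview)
Your proposal is correct and follows essentially the same route as the paper: apply Lemma~3.4 of~\cite{BFPS06} to the measure~\eqref{eq:22}, establish the contour representations by summing geometric series with appropriately placed contours, and verify biorthogonality by splitting the $x$-sum over $\Z$ into two half-sums with two different relative positions of the $z$- and $w$-contours. The paper is slightly terser: it invokes Proposition~3.1 of~\cite{BF07} directly for the formula~\eqref{eq:23} rather than re-deriving the convolution identity $\widetilde\phi_n\ast\widetilde\Psi^{n,t}_{n-k}=\widetilde\Psi^{n-1,t}_{n-1-k}$, and it does not spell out Assumption~(A) or the normalization check as you do.

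One point to flag: your identification $\widetilde V_n=\operatorname{span}\{x\mapsto v_k^x:1\le k\le n\}$ and the claim that $(\widetilde\phi_k\ast\widetilde\phi^{(k,n)})(x_k^{k-1},\cdot)$ collapses to a \emph{single} residue at $z=v_k$ implicitly assume that $v_1,\dotsc,v_n$ are pairwise distinct. The paper treats the general case directly, writing $v_1,\dotsc,v_n$ as distinct values $u_1<\dotsb<u_\nu$ with multiplicities $\alpha_1,\dotsc,\alpha_\nu$, so that $\widetilde V_n=\operatorname{span}\{x\mapsto u_i^x x^{j-1}:1\le i\le\nu,\,1\le j\le\alpha_i\}$ and the residues at higher-order poles produce the polynomial factors. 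If you prefer to keep your distinct-rates argument, you should add an analytic continuation step at the end, as was done for Theorem~\ref{thm:1}.
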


\begin{proof}
By Proposition~3.1 of~\cite{BF07}, we have
\begin{equation}
\widetilde \Psi_{n-k}^{n,t}(x) = \frac1{2\pi\I} \oint_{\Gamma_0} \dx w\, w^{x+k-1} \e^{t/w} \, \frac{(1-v_1w)\dotsb(1-v_nw)}{(1-v_1w)\dotsb(1-v_kw)}
\end{equation}
for $k\geq 1$. A change of variable $z=1/w$ then yields \eqref{eq:23}. Next we need to verify that $\{\widetilde \Psi_{n-k}^{n,t}:1\leq k \leq n\}$ is biorthogonal to $\{\widetilde \Phi_{n-\ell}^{n,t}:1\leq \ell\leq n\}$ (see Eq.~(3.5) of~\cite{BF07}). We split the sum over $\Z$ into two parts, one over $x\geq 0$ and one over $x<0$. Then,
\begin{multline}
\sum_{x \geq 0} \widetilde \Psi_{n-k}^{n,t}(x)\widetilde\Phi_{n-k}^{n,t}(x) \\
= \sum_{x \geq 0} \frac{1}{(2\pi\I)^2} \oint_{\Gamma_v}\dx w\oint_{\Gamma_0}\dx z\, \frac{\e^{t z}}{\e^{t w}}\, \frac{w^{x+n}}{z^{x+n+1}} \, \frac{(z-v_{k+1})\dotsb(z-v_n)}{(w-v_\ell)\dotsb(w-v_n)}.
\end{multline}
We choose $\Gamma_0$ and $\Gamma_v$ such that $\lvert w \rvert \leq \lvert z \rvert$ which allows us to put the sum inside the integrals. This gives
\begin{multline}
\sum_{x \geq 0} \widetilde\Psi_{n-k}^{n,t}(x)\widetilde\Phi_{n-k}^{n,t}(x)\\
 = \frac{1}{(2\pi\I)^2} \oint_{\Gamma_v}\dx w \oint_{\Gamma_{0,w}} \dx z\, \frac{\e^{tz}}{\e^{tw}}\,\frac{w^n}{z^n} \, \frac{(z-v_{k+1})\dotsb(z-v_n)}{(w-v_k)\dotsb(w-v_\ell)} \, \frac{1}{z-w}.
\end{multline}
For $x<0$ we choose $\Gamma_0$ and $\Gamma_v$ such that they satisfy $\lvert w \rvert > \lvert z \rvert$ which gives
\begin{multline}
\sum_{x < 0} \widetilde\Psi_{n-k}^{n,t}(x)\widetilde\Phi_{n-k}^{n,t}(x) \\
 = - \frac{1}{(2\pi\I)^2} \oint_{\Gamma_0}\dx z \oint_{\Gamma_{v,z}} \dx w\, \frac{\e^{tz}}{\e^{tw}}\,\frac{w^n}{z^n} \, \frac{(z-v_{k+1})\dotsb(z-v_n)}{(w-v_k)\dotsb(w-v_\ell)} \, \frac{1}{z-w}.
\end{multline}
Thus,
\begin{equation}
\sum_{x\in\Z} \widetilde\Psi_{n-k}^{n,t}(x)\widetilde\Phi_{n-k}^{n,t}(x) = \frac{1}{2\pi\I} \oint_{\Gamma_v} \dx w \, \frac{(w-v_{k+1})\dotsb(w-v_n)}{(w-v_\ell)\dotsb(w-v_n)} = \delta_{k\ell}.
\end{equation}
Finally, we show that $\{\widetilde \Phi_{n-\ell}^{n,t}:1\leq \ell\leq n\}$ spans the space of functions $V_n$. We denote by $u_1<\dotsb<u_\nu$ the different values of $v_1,\dotsc,v_n$ and $\alpha_k$ the multiplicity of $u_k$, i.e., $\alpha_1+\dotsb+\alpha_\nu = n$. Then, we may write
\begin{equation}
\begin{aligned}
\widetilde \Phi_{n-1}^{n,t}(x) & = \frac{1}{2\pi\I} \oint_{\Gamma_v}\dx w \, \frac{w^{x+n}}{\e^{t w}} \, \frac{1}{(w-u_1)^{\alpha_1}\dotsb(w-u_\nu)^{\alpha_\nu}} \\
& = \sum_{i=1}^\nu\frac1{(\alpha_i-1)!}\,\frac{\dx^{\alpha_i-1}}{\dx w^{\alpha_i-1}}\bigg|_{w=u_i} \biggl( \frac{w^{x+n}}{\e^{t w}} \prod_{j \neq i} \frac1{(w-u_j)^{\alpha_j}}\biggr) \\
& = \sum_{i=1}^\nu (u_i)^x \sum_{j=1}^{\alpha_i} c_{i,j} x^{j-1}.
\end{aligned}
\end{equation}
For $\ell=2,\dotsc,n$, we can represent $\widetilde \Phi_{n-\ell}^{n,t}$ in the same way, but with exponents $\alpha_{\ell,i} \le\alpha_i$, $1\leq i\leq \nu$. Since $(\alpha_{k,1},\dotsc,\alpha_{k,\nu}) \neq (\alpha_{\ell,1},\dotsc,\alpha_{\ell,\nu})$ for $k\neq \ell$, this shows that
\begin{equation}
\lin \{\widetilde \Phi_{n-\ell}^{n,t} : 1\leq \ell \leq n\} = \lin \{x\mapsto (u_i)^x x^{j-1} :1\leq u \leq \nu, 1 \leq j \leq \alpha_i\},
\end{equation}
which is $V_n$.
\end{proof}

We continue by establishing the convergence result under the scaling (\ref{scaling}). Correspondingly, we rescale (and conjugate) the kernel $\widetilde K_t$ and define the rescaled kernel as
\begin{equation}
K_{\tau,T,\mathrm{resc}}^\mu((\xi,n),(\xi',n'))
=\frac{T^{n'/2}}{T^{n/2}} \, \sqrt T \, \widetilde K_{\tau T}^{\mu_T}\bigl(([\tau T - \xi \sqrt{T}],n),([\tau T-\xi' \sqrt{T}],n')\bigr)
\end{equation}
where $[\,\cdot\,]$ denotes the integer part, and the drift $v$ is now $\mu_T=1-\mu/\sqrt T$. Of course, $T$ is assumed to be so large that $\mu_T>0$ is satisfied.

\begin{prop}\label{prop:1}
For any fixed $L>0$, the rescaled kernel $K_{\tau,T,\mathrm{resc}}^\mu$ converges, uniformly for $\xi,\xi'\in[-L,L]$, as
\begin{equation}
\lim_{T\to\infty} K_{\tau,T,\mathrm{resc}}^\mu((\xi,n),(\xi',n')) = K_\tau^\mu((\xi,n),(\xi',n'))
\end{equation}
with $K_\tau^\mu\equiv K_\tau$ given in \eqref{eq:26}.
\end{prop}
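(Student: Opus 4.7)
The plan is a saddle-point / Laplace analysis of the three contour-integral pieces that make up $\widetilde K_{\tau T}^{\mu_T}$ in Proposition~\ref{prop1}. Under the scaling \eqref{scaling}, all three have a common critical point at $z = w = 1$ in the original integration variables, while the poles $v_j = 1 - \mu_j/\sqrt T$ cluster at distance $O(T^{-1/2})$ from this critical point. The natural rescaling is therefore $z = 1 + \tilde z/\sqrt T$ (and similarly $w = 1 + \tilde w/\sqrt T$), which places the rescaled poles at the finite positions $\tilde z = -\mu_j$ and the saddle at $\tilde z = 0$.

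For the pointwise limit, a Taylor expansion of the exponent $\tau T z - (x+n+1)\log z$, with $x = [\tau T - \xi\sqrt T]$, gives $\tau T + \tau\tilde z^2/2 + \xi \tilde z + O(T^{-1/2})$, and each linear factor $(z - v_j)$ becomes $(\tilde z + \mu_j)/\sqrt T$. After the orientation-reversing substitution $\tilde z \mapsto -z$ (which matches the sign conventions of Theorem~\ref{thm:1}), this yields the pointwise asymptotics
\[
\widetilde\Psi_{n-k}^{n,\tau T}(x) \sim \frac{\e^{\tau T}}{T^{(n-k+1)/2}}\Psi_{n-k}^{n,\tau}(\xi), \qquad \widetilde\Phi_{n'-\ell}^{n',\tau T}(x') \sim \e^{-\tau T}\, T^{(n'-\ell)/2}\Phi_{n'-\ell}^{n',\tau}(\xi'),
\]
and similarly $\widetilde\phi^{(n,n')}(x,x') \sim T^{(n'-n-1)/2}\phi^{(n,n')}(\xi,\xi')$ for $n' \geq n+2$, with the boundary case $n' = n+1$ treated directly via $v_{n+1}^{x'-x}\one_{[x'\geq x]} \to \e^{\mu_{n+1}(\xi'-\xi)}\one_{[\xi\geq \xi']}$. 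The factors $\e^{\pm\tau T}$ cancel in the products $\widetilde\Psi \widetilde\Phi$, and the remaining powers of $T$ combine with the prefactor $\frac{T^{n'/2}}{T^{n/2}}\sqrt T$ of the rescaled kernel to reproduce $K_\tau^\mu$ from \eqref{eq:26}.

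To upgrade pointwise convergence to the uniform convergence on $\xi,\xi'\in[-L,L]$ claimed in the proposition, I would fix a steep-descent contour for each integral (for instance a circle of radius close to $1$) and use that $\Re[\tau T z - (x+n+1)\log z]$ decreases quadratically in angular distance from the saddle, confining the essential contribution to a window $|\tilde z| \leq T^{\eps}$ around the origin. On that window the Taylor remainder is uniformly $O(T^{-1/2+3\eps})$ on $(\xi,\xi')\in[-L,L]^2$, and dominated convergence gives the uniform limit for each of the three pieces. The main obstacle I anticipate is the book-keeping of residues picked up when the deformed contour sweeps past rescaled poles $\tilde w = -\mu_j$ lying inside the saddle window, especially when several $\mu_j$ are close together, together with the careful matching of contour orientations so that the limit reproduces exactly the line integral representation of $\Psi_{n-k}^{n,\tau}$ and the closed-contour representation of $\Phi_{n'-\ell}^{n',\tau}$ as given in Theorem~\ref{thm:1}.
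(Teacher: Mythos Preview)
Your proposal is correct and follows essentially the same route as the paper: decompose the rescaled kernel into the three pieces $\widetilde\phi^{(n,n')}$, $\widetilde\Psi$, $\widetilde\Phi$, and analyse each under the substitution $z=1+\tilde z/\sqrt T$ with steepest descent for the dominant exponent $f_0(z)=z-1-\ln z$.

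A few execution details differ from the paper and may save you effort. For $\widetilde\phi^{(n,n')}$ the paper does not run a saddle-point argument at all: it simply computes the residues at the $v_i$ explicitly, rescales the resulting finite sum, and reads off the limit (this tacitly uses that the $\mu_i$ are distinct). For $\widetilde\Phi$ there is likewise no genuine steepest descent: the contour $\Gamma_v$ is already of diameter $O(T^{-1/2})$, so the change of variable $W=-\sqrt T(w-1)$ and a Taylor expansion of the exponent suffice directly. Only $\widetilde\Psi$ requires a true steepest-descent localisation, and here the paper's contour choice removes exactly the obstacle you flag at the end: instead of a circle, it takes the vertical segment $\Re z=1-\mu_-/\sqrt T$ (to the right of all $v_j$) closed by a large circular arc, so no poles are ever swept during the deformation and no residue book-keeping is needed.
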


\begin{proof}
Let us define the rescaled functions
\begin{equation}
\begin{aligned}
\phi^{(n,n')}_{T,\mathrm{resc}}(\xi,\xi') & = T^{-(n'-n+1)/2} \, \widetilde \phi^{(n,n')}(\tau T + \xi \sqrt T,\tau T + \xi' \sqrt T), \\
\Psi_{n-k, T,\mathrm{resc}}^{n,\tau}(\xi) & = T^{(n-k+1)/2} \e^{-\tau T} \, \widetilde \Psi_{n-k}^{n,\tau T}(\tau T + \xi \sqrt T), \\
\Phi_{n-k, T,\mathrm{resc}}^{n,\tau}(\xi') & = T^{-(n-k)/2} \e^{\tau T} \, \widetilde \Phi_{n-k}^{n,\tau T}(\tau T + \xi' \sqrt T),
\end{aligned}
\end{equation}
where we also rescale the jump rates as in \eqref{scaling}. We have to show that these functions converge to their analogues from \eqref{eq:9}--\eqref{eq:9b}. We first verify that $\phi^{(n,n')}_{T,\mathrm{resc}}(\xi,\xi')\to \phi^{(n,n')}(\xi,\xi')$ with $n<n'$. For $y\geq y'$, the integrand of $\widetilde \phi^{(n,n')}(y,y')$ in \eqref{eq:27} has residue $0$ at infinity and thus the whole integral vanishes, while for $y<y'$, there is no pole at $z=0$ and therefore
\begin{equation}
\widetilde \phi^{(n,n')}(y,y')=\sum_{i=n+1}^{n'} v_i^{(y'-y)+(n'-n)-1} \prod_{j\neq i} \frac{1}{v_i-v_j} \, \one_{[y<y']}.
\end{equation}
Hence, for its rescaled version,
\begin{equation}
\phi_{T,\mathrm{resc}}^{(n,n')}(\xi,\xi')= \sum_{i=n+1}^{n'} \left( 1-\frac{\mu_i}{\sqrt T }\right)^{(\xi-\xi')\sqrt T +(n'-n)-1} \prod_{j\neq i}\frac{1}{\mu_j-\mu_i}\,\one_{[\xi>\xi']},
\end{equation}
which, as $T\to \infty$, converges to
\begin{equation}
\sum_{i=n+1}^{n'} \e^{\mu_i(\xi'-\xi)} \prod_{j\neq i} \frac{1}{\mu_j-\mu_i}\one_{\{\xi>\xi'\}} = \frac{(-1)^{n'-n}}{2\pi\I} \int_{\I\R+\mu_-} \hspace{-1em}\dx z \, \frac{\e^{z(\xi'-\xi)}}{(z-\mu_{n+1})\dotsb(z-\mu_{n'})}.
\end{equation}

Next we show that $\Psi_{n-k, T,\mathrm{resc}}^{n,\tau}(\xi)\to \Psi_{n-k}^{n,\tau}(\xi)$ uniformly for $\xi \in [-L,L]$. We have
\begin{equation}
\begin{aligned}\label{eq:28}
\Psi_{n-k, \mathrm{resc}}^{n,\tau,T}(\xi) & = \frac{\sqrt T}{2\pi \I} \oint_{\Gamma_{0,v}} \dx z \, \frac{\e^{\tau T(z-1)}}{z^{\tau T-\xi\sqrt T+n+1}} \, g(z) \\
& = \frac{\sqrt T}{2\pi \I} \oint_{\Gamma_{0,v}} \dx z\, \e^{\tau T f_0(z)+\sqrt T f_1(z) +f_2(z)} g(z)
\end{aligned}
\end{equation}
with $f_0(z)=z-1-\ln z$, $f_1(z)=\xi \ln z$, $f_2(z)=-(n+1)\ln z$, and
\begin{equation}
g(z)= \frac{(\sqrt T(z-1)+\mu_1)\dotsb (\sqrt T(z-1)+\mu_n)}{(\sqrt T(z-1)+\mu_1)\dotsb(\sqrt T(z-1)+\mu_k)}.
\end{equation}
A Taylor expansion around the double critical point of $f_0$, i.e., around $z_c=1$ gives
\begin{equation}
\begin{aligned}
f_0(z) & = \frac12(z-1)^2 + \Or((z-1)^3), \\
f_1(z) & = \xi(z-1) + \Or((z-1)^2), \\
f_2(z) & = 0+\Or(z-1).
\end{aligned}
\end{equation}

Fix $r>1-\mu_-/\sqrt{T}$ and deform $\Gamma_{0,v}$ to the contour $\gamma= \gamma_1 \cup \gamma_2$ with
\begin{equation}
\gamma_1 = 1-\mu_-/\sqrt{T} + \I [-r,r], \quad \gamma_2 = \{\lvert z \rvert = r\} \cap \{\Re z < 1-\mu_-/\sqrt{T}\}.
\end{equation}
Let us verify that $\gamma$ is a steep descent path for $f_0$. On the segment $\gamma_1$, we have that $\Re f_0(x+\I y)=x-1-\frac12\ln(x^2+y^2)$, so
\begin{equation}
\frac{\dx \Re f_0(x+\I y)}{\dx y}  = - \frac{y}{x^2+y^2}, \qquad y \in [-r,r],
\end{equation}
with $x=1-\mu_-/\sqrt{T}$. Thus $f_0$ is strictly increasing on $1-\frac{\mu_-}{\sqrt T}+\I [-r,0)$ and strictly decreasing on $1-\frac{\mu_-}{\sqrt T}+\I (0,r]$. On $\gamma_2$ we compute
\begin{equation}
\Re f_0(r \e^{\I \varphi}) = r \cos \varphi -1 -\ln r, \qquad \varphi \in (\arccos \tfrac1r,2\pi-\arccos \tfrac1r),
\end{equation}
which means that $f_0$ is strictly decreasing on $\gamma_2 \cap \{\Im z>0\}$ and strictly increasing on $\gamma_2 \cap \{\Im z < 0\}$. Thus $\gamma$ is a steepest descent path for $f_0$ and the major contribution comes from a line segment $\gamma_\delta = 1-\frac{\mu_-}{\sqrt T}+\I [-\delta,\delta]$ for any $\delta \in (0,1)$. Indeed, the error we make when we integrate along $\gamma_\delta$ instead of $\gamma$ is of order $\Or(\e^{-c T})$ with $c\sim \delta^2$. We therefore consider the integral on $\gamma_\delta$ only,
\begin{equation}\label{eq:29}
\frac{\sqrt T}{2\pi\I} \int_{\gamma_\delta} \dx z \,  g(z) \e^{\xi \sqrt T (z-1)+\frac {\tau T} 2 (z-1)^2} \e^{\Or \left(z-1,\sqrt T (z-1)^2,T (z-1)^3\right)}.
\end{equation}
Using $\lvert \e^x-1\rvert \leq \lvert x \rvert \e^{\lvert x \rvert}$, the difference between \eqref{eq:29} and the same integral without the error term can be bounded by
\begin{multline}
\frac{\sqrt T}{2\pi}\int_{\gamma_\delta} \dx z\, \Big| \e^{c_1 \xi \sqrt T (z-1)+c_2\frac {\tau T} 2 (z-1)^2}\\
\times \Or \left(z-1,\sqrt T (z-1)^2,T(z-1)^3,T^{(n-k)/2}(z-1)^{n-k}\right)\Big|
\end{multline}
for some constants $c_1$ and $c_2$ that can be chosen arbitrarily close to $1$ as $\delta \to 0$. By a change of variable $Z = \sqrt T(1-z)$ one then sees that this error is of order $\Or(T^{-1/2})$. Hence we can consider the integral in \eqref{eq:29} without the error term, which simplifies to
\begin{equation}
\frac{\sqrt T}{2\pi\I} \int_{\gamma_\delta} \dx z\,\e^{\tau T(z-1)^2/2+\xi \sqrt T(z-1)}g(z).
\end{equation}
The error we make if we extend $\gamma_\delta$ to $1-\frac{\mu_-}{\sqrt T}+\I \R$ is of order $\Or(\e^{-c T})$. All together the integral from \eqref{eq:28} agrees, up to an error $\Or(\e^{-c T},T^{-1/2})$ uniform in $\xi \in [-L,L]$, with
\begin{equation}
\frac{\sqrt T}{2\pi\I} \int_{1-\frac{\mu_-}{\sqrt T}+\I \R} \dx z \, \e^{\tau T (z-1)^2/2+\xi \sqrt T(z-1)} g(z),
\end{equation}
where the poles of $g$ lie on the left of the integration axis. After a change of variable $Z=-\sqrt T(z-1)$ this integral can be identified as $\Psi_{n-k}^{n,\tau}(\xi)$.

Finally we show that $\Phi_{n-k,T,\mathrm{resc}}^{n,\tau}(\xi') \to \Phi_{n-k}^{n,\tau}(\xi')$. We have
\begin{multline}
\Phi_{n-k, T,\mathrm{resc}}^{n,\tau}(\xi') = \frac{\sqrt T}{2\pi\I} \oint_{\Gamma_v}\dx w \, \e^{\tau T(\ln w-w+1)-\sqrt T \xi' \ln w + n \ln w} \\
\times \frac{(\sqrt T(w-1)+\mu_1)\dotsb(\sqrt T(w-1)+\mu_{k-1})}{(\sqrt T(w-1)+\mu_1)\dotsb(\sqrt T(w-1)+\mu_n)},
\end{multline}
and by a change of variable $W=-\sqrt T(w-1)$ and a Taylor expansion in the exponent we get
\begin{multline}
\Phi_{n-k,T,\mathrm{resc}}^{n,\tau}(\xi') \\
= \frac{(-1)^{n-k+1}}{2\pi\I} \oint_{\Gamma_a} \dx W\, \e^{-\tau W^2/2+\xi' W+\Or(T^{-1/2})} \, \frac{(w-\mu_1)\dotsb (w-\mu_{k-1})}{(w-\mu_1)\dotsb(w-\mu_n)},
\end{multline}
which converges uniformly for $\xi'\in[-L,L]$ to $\Phi_{n-k}^{n,\tau}(\xi')$.
\end{proof}

With the above results we can now prove Theorem~\ref{thm:3}.

\begin{proof}[Proof of Theorem~\ref{thm:3}]
Set $m=N(N+1)/2$ and define $n_1,\dotsc,n_m$ by
\begin{equation}
n_1=1, \enspace n_2=n_3=2, \enspace n_4=n_5=n_6 = 3, \enspace \dotsc, \enspace n_{m-N+1}=\dotsb = n_m =N.
\end{equation}
For $A \subseteq \R^m$ we set $A_T = (\tau T-\sqrt T A)\cap \Z$. Then, we have
\begin{equation}
\begin{aligned}
\nu_T&(A) = \sum_{(x_1,\dotsc,x_m) \in A_T} \det\bigl[ \widetilde K_{\tau T}^{\mu_T} ((x_i,n_i),(x_j,n_j)) \bigr]_{1\leq i,j\leq m} \\
& = T^{m/2} \int_A\dx^m x \det\bigl[\widetilde K_{\tau T}^{\mu_T} ((\tau T - [x_i] \sqrt T,n_i),(\tau T-[x_j]\sqrt T,n_j))\bigr]_{1\leq i,j\leq m} \\
& = \int_A\dx^m x \det\bigl[\widetilde K_{\tau,T,\mathrm{resc}}(([x_i],n_i),([x_j],n_j))\bigr]_{1\leq i,j\leq m}
\end{aligned}
\end{equation}
and
\begin{equation}
\nu(A) = \int_A \dx^m x \det\bigl[K_\tau^\mu((x_i,n_i),(x_j,n_j))\bigr]_{1\leq i,j\leq m}.
\end{equation}
Since the determinants are continuous functions of the kernels, we have by Proposition~\ref{prop:1} that
\begin{multline}
\lim_{T\to \infty}\det\bigl[\widetilde K^\mu_{\tau,T,\mathrm{resc}}(([x_i],n_i),([x_j],n_j))\bigr]_{1\leq i,j\leq m}\\
= \det\bigl[K_\tau^\mu((x_i,n_i),(x_j,n_j))\bigr]_{1\leq i,j\leq m}
\end{multline}
for all $x_1,\dotsc,x_m \in \R$. Thus we have shown that the densities of the probability measures in question converge pointwise to each other. Then, \eqref{eq:31} is a direct consequence of Scheff\'e's theorem, see e.g.~\cite{Bil68}.
\end{proof}

\newpage
\section{Warren's process with drifts}\label{sec:3}
We have seen in Section~\ref{sec:2} that the eigenvalues' density can be written as a product of determinants, and, in Lemma~\ref{lem:2}, we calculated the normalization constant, so that the probability measure on the eigenvalues reads
\begin{equation}\label{eq:19}
\Pb\biggl( \bigcap_{1\leq k\leq n \leq N} \{\lambda_k^n \in \dx \lambda_k^n\}\biggr) = \tilde p_t(\lambda) \, \dx\lambda
\end{equation}
with $\dx\lambda=\prod_{1\leq k\leq n\leq N} \dx \lambda^n_k$, and
\begin{equation}\label{eqDensity}
\tilde p_t(\lambda)= \det\bigl[\Psi_{N-k}^{N,t}(\lambda_\ell^N)\bigr]_{1\leq k,\ell \leq N} \prod_{n=1}^N \e^{-t \mu_n^2/2} \prod_{n=1}^N \det\bigl[\phi_n(\lambda_i^{n-1},\lambda_j^n)\bigr]_{1\leq i,j\leq n}
\end{equation}
In this section we explain the connection to a system of Brownian motions in $\GT_N$. More precisely, we consider Brownian motions $\{B_k^n, 1\leq k \leq n \leq N\}$ in $\GT_N$ starting from $0$, with drift $\mu_n$, and interacting as follows:
\begin{itemize}
\item The evolution of $B_k^n$ does not depend on the Brownian motions with higher upper index ($B_\ell^m$ for $m\geq n+1$, and any $\ell$);
\item $B_k^{n}$ is reflected off $B_k^{n-1}$ and $B_{k-1}^{n-1}$.
\end{itemize}
These reflections are sometimes called oblique reflections~\cite{WW84}, since in the $(x_k^{n-1},x_k^n)$-plane (resp.\ $(x_{k-1}^{n-1},x_k^n)$-plane) the reflection directions are not normal, but oblique as indicated in Figure~\ref{FigReflections}. Note that the projection on $\{B_1^n,1\leq n \leq N\}$ differs from the process studied in~\cite{PP08}, where the reflections are in the normal direction.
\begin{figure}
\begin{minipage}{.45\textwidth}
\begin{center}
\includegraphics{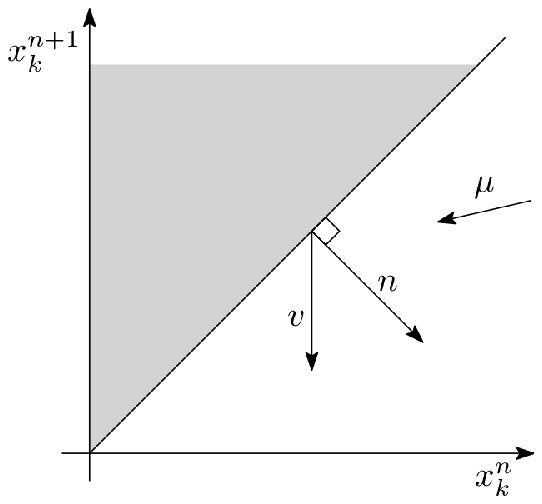}
\end{center}
\end{minipage}
\hfill
\begin{minipage}{.45\textwidth}
\begin{center}
\includegraphics{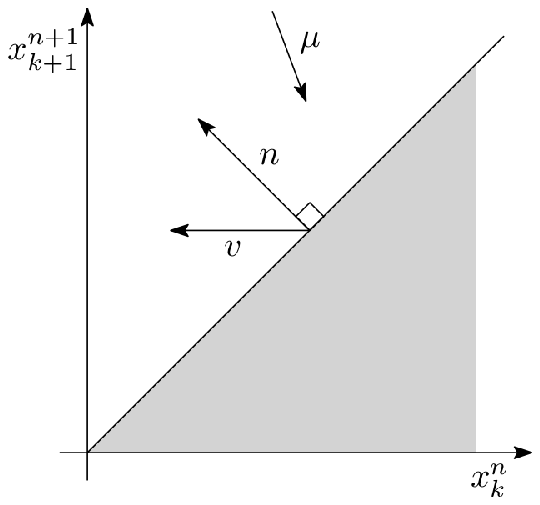}
\end{center}
\end{minipage}
\caption{The two reflection types in our system. They correspond to the boundary condition (\ref{eq:21}).}
\label{FigReflections}
\end{figure}

Let us now describe the system of Brownian motions. Denote by $p_t$ be the probability density of the Brownian motions in $\GT_N$ (its existence will be a consequence of our result). Following~\cite{HW87}, where Brownian motions with oblique reflections were studied, for a Brownian motion with drift $\mu$ reflected at the boundary in the direction $v$, the boundary conditions on the density function may be expressed as follows. Denote by $n$ the normal vector of the boundary, let $v$ be normalized such that $n\cdot v=1$ and let $q=v-n$. Moreover, set $\nabla_T=\nabla - n (n\cdot \nabla)$, $D^*=n\cdot \nabla - q\cdot \nabla_T$. Then, the boundary condition can be written as
\begin{equation}
D^* p_t = (\nabla_T \cdot q + 2 \mu \cdot n) p_t \quad \textrm{on the boundary}.
\end{equation}
Specializing to our case, we get
\begin{equation}\label{eq:21}
\frac{\partial}{\partial x_k^n} \, p_t(x) + (\mu_{n+1}-\mu_n)p_t(x)=0,
\end{equation}
whenever $x_k^n=x_k^{n+1}$ or $x_k^n=x_{k+1}^{n+1}$, for $1\leq k\leq N-1$.

This process, without drifts, was introduced by Warren in~\cite{War07}, where he determined the transition probability for any initial condition and also showed that the process is well-defined when starting from $0$. We here consider a system of Brownian motions with constant (bounded) drifts, which can be expressed as follows,
\begin{equation}
\begin{aligned}
B_1^1(t)&=\mu_1 t + b_1^1(t),\\
B_1^n(t)&=\mu_n t + b_1^n(t)-L_{B_1^{n-1}-B_1^n}(t),\quad n=2,\dotsc,N,\\
B_k^n(t)&=\mu_n t + b_k^n(t)-L_{B_k^{n-1}-B_k^n}(t)+L_{B_k^n-B_{k-1}^{n-1}}(t),\quad 2\leq k < n \leq N,\\
B_n^n(t)&=\mu_n t + b_n^n(t)+L_{B_n^n-B_{n-1}^{n-1}}(t),\quad n=2,\dotsc,N,
\end{aligned}
\end{equation}
where the $b_k^n$, $1\leq k \leq n \leq N$, are independent standard Brownian motions and $L_{X-Y}(t)$ is twice the semimartingale local time at zero of $X(t)-Y(t)$. The question of well-definedness was related to the, a priori possible, presence of triple collisions. Bounded drifts do not influence this property as can be seen by applying Girsanov's theorem like in the works~\cite{IK10,KPS12}.

Reflected Brownian motions can be also defined as follows. A standard one-dimensional reflected Brownian motion can also be defined to be the image under the Skorokhod map of standard Brownian motion. More precisely, one define a Brownian motion, $B(t)$, starting from $y\in\R$ and being reflected at some continuous function $f(t)$ with $f(0)<y$ is via the Skorokhod representation~\cite{Sko61,AO76}
\begin{equation}\begin{aligned}
B(t)&=y+b(t)-\min\big\{0,\inf_{0\leq s \leq t}(y+b(s)-f(s))\big\} \\
&= \max\big\{y+b(t),\sup_{0\leq s \leq t}(f(s)+b(t)-b(s))\big\},
\end{aligned}\end{equation}
where $b$ is a standard Brownian motion starting at $0$. In this paper we use as definition of the Warren process with drifts to be the image of independent Brownian motions under the extended Skorokhod map introduced by Burdzy, Kang and Ramanan (see Theorem~2.6 of~\cite{BKR09} for an explicit formula).

\begin{proof}[Proof of Theorem~\ref{thm:2}]
Consider a particle system as in Section~\ref{sec:4} but where the particles evolves independently, i.e., $\tilde x_k^n(0)=-n+k-1$ for $1\leq k \leq n\leq N$ and the evolution of $\tilde x_k^n(t)$ is a continuous time random walk with jump rate $v_n$. Consider now the scaling (\ref{scaling})
\begin{equation}
t=\tau T,\quad \tilde B_k^n=\frac{\tilde x_k^n-\tau T}{-\sqrt{T}}, \quad v_n=1-\frac{\mu_n}{\sqrt{T}}.
\end{equation}
The $\tilde x_k^n$'s are independent, so in the $T\to\infty$ limit, \mbox{$(\tilde B_k^n, 1\leq k\leq n\leq N)$} converges weakly to a \mbox{$N(N+1)/2$}-dimensional Brownian motion \mbox{$(B_k^n, 1\leq k\leq n\leq N)$}, where $B_k^n$ has drift $\mu_n$ (see Donsker's theorem). As shown in~\cite{GS12} by Gorin and Shkolnikov, the particle with the blocking/pushing dynamics converges wearkly as $T\to\infty$ to the Warren process with level-dependent drifts. To be precise, they first showed the convergence for the drift-less case, where the limit process is the Warren process. However the same proof applies for more generic cases including the one of this paper, see Remark 10 of~\cite{GS12}.

In Proposition~\ref{prop1} we have proven that the correlation functions has a limit as $T\to\infty$. Further, the integral of the density is one, so that no mass is lost at infinity or localized in some Dirac mass. Thus, the $n$-point correlation function of the reflected Brownian motion is the $T\to\infty$ limit of the $n$-point correlation function for the interacting particle system.
\end{proof}

For completeness, let us remark that the transition density $p_t$ in $\GT_N$ satisfies:\\[0.5em]
(1) the Fokker-Planck equation (or Kolmogorov forward equation)
\begin{equation}\label{eq:16}
\frac{\partial}{\partial t}\, p_t(x)=\sum_{n=1}^N \sum_{k=1}^n \left( \frac{1}{2}\, \frac{\partial^2}{\partial (x_k^n)^2}-\mu_n \frac{\partial}{\partial x_k^n}\right)  p_t(x),
\end{equation}
(2) the initial condition
\begin{equation}\label{eq:20}
\lim_{t\searrow 0} p_t(x) \dx x = \prod_{1\leq k\leq n\leq N} \delta_{x_k^n},
\end{equation}
(3) the boundary condition (\ref{eq:21}).

\begin{prop}\label{prop:2}
Denote by $p_t: \GT_N \to [0,1]$ be the probability density defined in (\ref{eqDensity}). Inside $\GT_N$, this density satisfies the Fokker-Planck equation (\ref{eq:16}), the initial condition (\ref{eq:20}), and the boundary condition (\ref{eq:21}).
\end{prop}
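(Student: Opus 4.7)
The plan is to substitute the explicit form \eqref{eqDensity} into each of the three conditions and verify them by direct computation, exploiting a crucial simplification inside $\GT_N$: on the interior, every indicator $\one_{[x_i^{n-1}>x_j^n]}$ in $\phi_n$ is automatically active due to interlacing, so the second product of determinants collapses to a single exponential,
\begin{equation*}
\prod_{n=1}^N \det[\phi_n(x_i^{n-1},x_j^n)]_{1\leq i,j\leq n} = \prod_{n=1}^{N-1}\prod_{k=1}^n \e^{(\mu_n-\mu_{n+1}) x_k^n}\,\prod_{k=1}^N \e^{\mu_N x_k^N} =: E(x).
\end{equation*}
Thus inside $\GT_N$ we may write $p_t(x) = \e^{-t\sum_n \mu_n^2/2}\,E(x)\,D(x;t)$ with $D(x;t) := \det[\Psi^{N,t}_{N-k}(x_\ell^N)]_{1\leq k,\ell\leq N}$. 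In this form the positions $x_k^n$ with $n<N$ enter only through $E(x)$, while the top-level positions $x_k^N$ enter through both $E$ and $D$.

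For the Fokker--Planck equation \eqref{eq:16}, I would first read off from \eqref{eq:9a} that $\partial_t \Psi^{N,t}_{N-k} = \tfrac12 \partial_x^2 \Psi^{N,t}_{N-k}$ (the contour integrand brings down $z^2/2$ from $\partial_t$ and $z^2$ from $\partial_x^2$). Differentiating $D$ column by column then gives $\partial_t D = \tfrac12 \sum_{\ell=1}^N \partial^2_{x_\ell^N} D$. On the right of \eqref{eq:16}, the derivatives at levels $n<N$ act only on $E$ and contribute $\sum_{k=1}^n\bigl[\tfrac12(\mu_n-\mu_{n+1})^2 - \mu_n(\mu_n-\mu_{n+1})\bigr] p_t = n\cdot\tfrac12(\mu_{n+1}^2-\mu_n^2)\,p_t$, while the derivatives at level $N$ produce $-\tfrac{N}{2}\mu_N^2 p_t + \e^{-t\sum\mu_n^2/2}E(x)\,\partial_t D$ after the cross terms between $\partial_{x_k^N} E$ and $\partial_{x_k^N} D$ cancel against $\mu_N \partial_{x_k^N} p_t$. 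Matching against $\partial_t p_t = -\tfrac12\sum_n\mu_n^2\,p_t + \e^{-t\sum\mu_n^2/2}E(x)\,\partial_t D$ reduces the identity to the telescoping sum $\sum_{n=1}^{N-1} n\cdot\tfrac12(\mu_{n+1}^2-\mu_n^2) - \tfrac{N}{2}\mu_N^2 = -\tfrac12\sum_{n=1}^N\mu_n^2$, which is elementary.

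For the boundary condition \eqref{eq:21}, the explicit form of $E$ gives, for $n<N$ and $k\leq n$, $\partial_{x_k^n}\log p_t = \mu_n - \mu_{n+1}$ throughout the interior of $\GT_N$, so $\partial_{x_k^n} p_t + (\mu_{n+1}-\mu_n)p_t \equiv 0$; in particular this persists on the interlacing walls by continuity. For the initial condition \eqref{eq:20}, I would appeal to the underlying matrix construction: $p_t$ is by Theorem~\ref{thm:1} the joint law at time $t$ of the eigenvalues of the minors of $H(t)=tM + \text{Brownian part}$ started from $H(0)=0$, and continuous dependence of eigenvalues on matrix entries forces all $\lambda_k^n(t)\to 0$ almost surely as $t\searrow 0$.

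The only step requiring real care is the bookkeeping at level $N$ in the Fokker--Planck verification, where both $E(x)$ and $D(x;t)$ depend on $x_k^N$: one must expand $\partial^2_{x_k^N}(E D)$ into three pieces, show the cross term $2\mu_N \partial_{x_k^N} D$ is exactly cancelled by $-\mu_N \partial_{x_k^N}(ED)$, and then invoke the heat equation for $D$ to produce $\partial_t D$. Everything else reduces either to an algebraic telescoping identity in the $\mu_n$'s or to the interior smoothness of $p_t$ on $\GT_N$.
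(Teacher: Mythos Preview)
Your proposal is correct and takes essentially the same route as the paper: collapse the $\phi_n$-determinants to a single exponential factor inside $\GT_N$, use the heat equation for $\Psi^{N,t}_{N-k}$ to handle the time derivative of the top-level determinant, and reduce the Fokker--Planck verification to the telescoping identity $\sum_{n=1}^{N-1} n(\mu_{n+1}^2-\mu_n^2)=N\mu_N^2-\sum_{n=1}^N\mu_n^2$. The only minor variations are that the paper first absorbs $\e^{\mu_N x}$ into $\Psi$ (setting $\tilde\Psi^{N,t}_{N-k}(x)=\e^{\mu_N x}\Psi^{N,t}_{N-k}(x)$, which eliminates your level-$N$ cross-term bookkeeping at the cost of a slightly shifted heat equation) and that it verifies the initial condition directly from the Gaussian factor in $\Psi^{N,t}$ rather than by appealing to the matrix model.
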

\begin{proof}
First observe that by setting $\tilde \Psi_{N-k}^{N,t}(x) = \e^{\mu_N x} \Psi_{N-k}^{N,t}(x)$, we can rewrite \eqref{eqDensity} as a probability measure on $\GT_N$ with density
\begin{equation}
\tilde p_t(x) = \det\left[ \tilde \Psi_{N-k}^{N,t}(x_\ell^N)\right]_{1\leq k,\ell\leq N} \prod_{k=1}^N \e^{-t\mu_k^2/2} \prod_{n=1}^{N-1} \prod_{k=1}^n \e^{(\mu_n-\mu_{n+1})x_k^n}.
\end{equation}
for $x=(x_k^n)_{1\leq k\leq n \leq N}\in \GT_N$. The double product only depends on $(x_k^n)_{1\leq k\leq n\leq N-1}$, while the determinant is a function of $(x_k^N)_{1\leq k\leq N}$.
We have
\begin{equation}
\frac{1}{2} \frac{\partial^2}{\partial x^2}\,\tilde \Psi_{N-k}^{N,t}(x)= \frac{\partial}{\partial t}\,\tilde \Psi_{N-k}^{N,t}(x)+\mu_N \frac{\partial}{\partial x} \tilde \Psi_{N-k}^{N,t}(x) -\frac{\mu_N^2}{2} \tilde \Psi_{N-k}^{N,t}(x),
\end{equation}
from which follows that
\begin{equation} \label{eq:13}
\frac{1}{2} \sum_{\ell=1}^N \frac{\partial^2}{\partial (x_\ell^N)^2}\,\tilde p_t(x) = \frac{\partial}{\partial t}\,\tilde p_t(x)+\mu_N \sum_{\ell=1}^N \frac{\partial}{\partial x_\ell^N}\,\tilde p_t(x)+\frac12 \left(\sum_{n=1}^N \mu_n^2-N \mu_N^2\right)\tilde p_t(x).
\end{equation}
For $k=1,\dots,N-1$, we have
\begin{equation}\label{eq:14}
\frac{\partial}{\partial x_k^n} \, \tilde p_t(x)=(\mu_n-\mu_{n+1})\tilde p_t(x), \quad \frac{\partial^2}{\partial (x_k^n)^2} \, \tilde p_t(x)=(\mu_n-\mu_{n+1})^2 \tilde p_t(x),
\end{equation}
and thus, putting \eqref{eq:13} and \eqref{eq:14} together,
\begin{multline} \label{eq:15}
\frac{1}{2} \sum_{n=1}^N\sum_{k=1}^n \frac{\partial^2}{\partial(x_k^n)^2}\,\tilde p_t(x) = \frac{\partial}{\partial t}\,\tilde p_t(x) +\mu_N \sum_{k=1}^N \frac{\partial}{\partial x_k^N}\,\tilde p_t(x) \\
+ \frac{1}{2} \left(\sum_{n=1}^N \mu_n^2-N \mu_N^2+ \sum_{n=1}^{N-1} n(\mu_n-\mu_{n+1})^2 \right)\tilde p_t(x).
\end{multline}
Using that
\begin{equation} \label{eq:17}
N \mu_N^2-\sum_{n=1}^N \mu_n^2 = \sum_{n=1}^{N-1}n(\mu_{n+1}^2-\mu_n^2) = \sum_{n=1}^{N-1}n(\mu_n-\mu_{n+1})^2 - 2 \sum_{k=1}^{N-1} n \mu_n(\mu_n-\mu_{n+1})
\end{equation}
the expression between the brackets in \eqref{eq:15} simplifies to $2\sum n\mu_n(\mu_n-\mu_{n+1})$.
On the other hand,
\begin{equation} \label{eq:18}
\sum_{n=1}^N \sum_{k=1}^n \mu_n \frac{\partial}{\partial x_k^n} \,\tilde p_t(x)= \sum_{n=1}^{N-1}n\mu_n(\mu_n-\mu_{n+1})\tilde p_t(x)+\mu_N \sum_{k=1}^N \frac{\partial}{\partial x_k^N} \, \tilde p_t(x).
\end{equation}
Then, \eqref{eq:16} follows from \eqref{eq:15}, \eqref{eq:17} and \eqref{eq:18}. The initial condition \eqref{eq:20} is satified because as $t\searrow 0$, we obtain the Dirac measure at $x_k^N=0$ for $1\leq k \leq N$ and since we consider $\tilde p_t$ on $\GT_N$, this immediately implies that $x^k_n=0$ for all $1\leq k \leq n \leq N-1$. Finally, the boundary condition \eqref{eq:21} holds trivially by \eqref{eq:14}.
\end{proof}

\begin{remark}
The three conditions in Proposition~\ref{prop:2} are, in general, not enough to prove that $\tilde p_t=p_t$. For that, one would need the backwards equation.
\end{remark}

\appendix

\section{Determinantal correlations}
Since we refer several times to Lemma~3.4 of~\cite{BFPS06}, we report it here.
\begin{lem}[Lemma~3.4 of~\cite{BFPS06}]
Assume we have a signed measure on $\{x_i^n,n=1,\dotsc,N,i=1,\dotsc,n\}$ given in the form,
\begin{equation}\label{Sasweight}
 \frac{1}{Z_N}\prod_{n=1}^{N-1} \det[\phi_n(x_i^n,x_j^{n+1})]_{1\leq i,j\leq n+1} \det[\Psi_{N-i}^{N}(x_{j}^N)]_{1\leq i,j \leq N},
\end{equation}
where $x_{n+1}^n$ are some ``virtual'' variables and $Z_N$ is a normalization constant. If $Z_N\neq 0$, then the correlation functions are determinantal.

To write down the kernel we need to introduce some notations. Define
\begin{equation}\label{Sasdef phi12}
\phi^{(n_1,n_2)}(x,y)=
\begin{cases} (\phi_{n_1} \ast \dotsb \ast \phi_{n_2-1})(x,y),& n_1<n_2,\\
0,& n_1\geq n_2,
\end{cases}.
\end{equation}
where $(a* b)(x,y)=\sum_{z\in\Z}a(x,z) b(z,y)$, and, for $1\leq n<N$,
\begin{equation}\label{Sasdef_psi}
\Psi_{n-j}^{n}(x) := (\phi^{(n,N)} * \Psi_{N-j}^{N})(y), \quad j=1,\dotsc,N.
\end{equation}
Set $\phi_0(x_1^0,x)=1$. Then the functions
\begin{equation}
\{ (\phi_0*\phi^{(1,n)})(x_1^0,x), \dots,(\phi_{n-2}*\phi^{(n-1,n)})(x_{n-1}^{n-2},x), \phi_{n-1}(x_{n}^{n-1},x)\}
\end{equation}
are linearly independent and generate the $n$-dimensional space $V_n$. Define a set of functions $\{\Phi_j^{n}(x), j=0,\dotsc,n-1\}$ spanning $V_n$ defined by the orthogonality relations
\begin{equation}\label{Sasortho}
\sum_x \Phi_i^n(x) \Psi_j^n(x) = \delta_{i,j}
\end{equation}
for $0\leq i,j\leq n-1$.

Further, if $\phi_n(x_{n+1}^n,x)=c_n \Phi_0^{(n+1)}(x)$, for some $c_n\neq 0$, \mbox{$n=1,\dotsc,N-1$}, then the kernel takes the simple form
\begin{equation}\label{SasK}
K(n_1,x_1;n_2,x_2)= -\phi^{(n_1,n_2)}(x_1,x_2)+ \sum_{k=1}^{n_2} \Psi_{n_1-k}^{n_1}(x_1) \Phi_{n_2-k}^{n_2}(x_2).
\end{equation}
\end{lem}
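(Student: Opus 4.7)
The plan is to apply the multi-level (extended) Eynard--Mehta theorem to the measure \eqref{Sasweight}. The product-of-determinants structure across $N$ levels is of Karlin--McGregor / Lindström--Gessel--Viennot type, so the first step is to integrate out the real variables using the Cauchy--Binet identity iteratively from top to bottom. Each application collapses two consecutive levels into a single determinant whose kernel is the convolution of the two; after $N-1$ iterations the partition function reduces to $Z_N = \const \cdot \det M$ for an $N\times N$ Gram-type matrix $M$ built from the virtual variables $\{x_{n+1}^n\}_{n=1}^{N-1}$ and the top-level functions $\Psi_{N-j}^N$. The hypothesis $Z_N \neq 0$ is then equivalent to invertibility of $M$.

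The second step is to compute the $m$-point correlation functions. The standard Eynard--Mehta argument (inclusion-exclusion on marked points, followed by the same Cauchy--Binet manipulations) yields determinantal correlations with a kernel of the form
\[
 K(n_1,x_1;n_2,x_2) = -\phi^{(n_1,n_2)}(x_1,x_2) + \sum_{i,j=1}^{N} A_i^{n_1}(x_1)\,[M^{-1}]_{ij}\,B_j^{n_2}(x_2),
\]
where $B_j^{n_2}(x_2) = (\phi^{(n_2,N)} \ast \Psi_{N-j}^N)(x_2) = \Psi_{n_2-j}^{n_2}(x_2)$ by \eqref{Sasdef_psi}, and the $A_i^{n_1}$ are obtained by propagating the virtual-variable rows of $M$ downward to level $n_1$ via convolutions of the $\phi_k$'s.

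The third step is to collapse this double sum to the single sum in \eqref{SasK} using the biorthogonality \eqref{Sasortho}. The key is the hypothesis $\phi_n(x_{n+1}^n,x) = c_n \Phi_0^{(n+1)}(x)$: this forces the virtual-variable rows of $M$ to coincide, up to constants, with elements of the spanning set $\{(\phi_0 \ast \phi^{(1,n)})(x_1^0,\cdot),\dots, \phi_{n-1}(x_n^{n-1},\cdot)\}$ of $V_n$ at each level. Consequently, the rows of $M^{-1}$, paired against the $\Psi_{N-j}^N$ propagated down to level $n_2$, realize precisely the biorthogonal family $\{\Phi_{n_2-k}^{n_2}\}$. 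The double sum then telescopes into $\sum_{k=1}^{n_2}\Psi_{n_1-k}^{n_1}(x_1)\Phi_{n_2-k}^{n_2}(x_2)$, which is the claimed form.

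The main obstacle is the algebraic bookkeeping in that final step: one must show that the inversion of the single $N\times N$ Gram matrix $M$ decouples, level-by-level, into the biorthogonal inversions $\{\Psi_{n-k}^n\} \leftrightarrow \{\Phi_{n-k}^n\}$ for each $1\le n \le N$ simultaneously. It is precisely the compatibility hypothesis $\phi_n(x_{n+1}^n,\cdot) \propto \Phi_0^{(n+1)}(\cdot)$, together with the downward-propagation definition \eqref{Sasdef_psi} of the $\Psi_{n-j}^n$, that enforces this factorization. Without these compatibility conditions the correlations are still determinantal, but the kernel would be expressible only implicitly via $M^{-1}$ rather than in the clean biorthogonal form \eqref{SasK}.
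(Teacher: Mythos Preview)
The paper does not contain a proof of this lemma: it is quoted verbatim in the appendix as Lemma~3.4 of~\cite{BFPS06} and simply invoked as an input in Sections~\ref{sec:2} and~\ref{sec:4}. So there is no ``paper's own proof'' to compare against.

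That said, your sketch is the standard Eynard--Mehta argument and matches in outline what is done in the original reference~\cite{BFPS06}: iterated Cauchy--Binet to compute $Z_N=\det M$, the general determinantal kernel involving $M^{-1}$, and then the simplification under the compatibility hypothesis $\phi_n(x_{n+1}^n,\cdot)=c_n\Phi_0^{(n+1)}(\cdot)$. One point worth sharpening: in the actual argument the matrix $M$ is \emph{upper-triangular} once one uses the compatibility condition together with the orthogonality \eqref{Sasortho}, because $(\phi_{k-1}*\phi^{(k,N)}*\Psi_{N-\ell}^N)(x_k^{k-1})$ vanishes for $k>\ell$ (this is exactly what Lemma~\ref{lem:2} verifies in the specific model). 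It is this triangularity, rather than a vague ``telescoping'', that makes the inversion of $M$ explicit and delivers the clean single-sum kernel \eqref{SasK}. Your third paragraph gestures at this but does not name the mechanism; if you were writing out the proof in full, that is the step where the argument either goes through cleanly or gets stuck.
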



\end{document}